\documentclass{llncs}
\usepackage{stmaryrd}
\usepackage{amsmath}
\usepackage{amssymb}
\usepackage{autobreak}
\usepackage{multirow}
\usepackage[table,xcdraw]{xcolor}
\usepackage{bussproofs}
\usepackage{pgfplots}
\usepackage{color}
\usepackage[utf8]{inputenc}
\usetikzlibrary{quantikz2}
\usepackage{xspace}
\usepackage{amsfonts}
\usepackage{todonotes}
\usepackage{url}
\usepackage{caption}
\usepackage{cleveref}
\usepackage{subcaption}
\usepackage{xcolor}

\usepackage{mathtools}
\usetikzlibrary{shapes}
\usepackage{wrapfig}
\usepackage{mathabx}
\usepackage{nicefrac}
\usepackage{mdframed}

\definecolor{mygrey}{rgb}{0.9, 0.95, 0.95}
\definecolor {iron}{RGB}{72, 73, 75}
\definecolor{yell}{RGB}{246,187,0}
\definecolor{yell}{RGB}{246,187,0}
\definecolor{blu}{RGB}{0,20,100}
\definecolor{re}{RGB}{148,6,39}
\definecolor{purp}{RGB}{74,13,70}
\definecolor{npurp}{RGB}{180,242,185}
\definecolor{gree}{RGB}{0,100,0}
\definecolor{rgree}{RGB}{74,53,20}
\definecolor{orange}{RGB}{255,178,102}

\newcommand{\CNOT}[0]{\ensuremath{\mathsf{{CNOT}}\xspace}}

\newcommand{\Hg}[0]{\ensuremath{\mathsf{H}\xspace}}

\newcommand{\Rx}{\ensuremath{\mathsf{Rx}\xspace}}
\newcommand{\Ry}{\ensuremath{\mathsf{Ry}\xspace}}
\newcommand{\Rz}{\ensuremath{\mathsf{Rz}\xspace}}
\newcommand{\CRz}{\ensuremath{\mathsf{CRz}\xspace}}

\newcommand{\pasum}[4]{\ensuremath{\left\langle #1,#2,#3\right\rangle_{#4}}\xspace}

\newcommand{\Mon}{\ensuremath{\texttt{Mon}}\xspace}

\newcommand{\boolcond}{\ensuremath{\texttt{bc}}\xspace}

\newcommand{\had}[0]{\ensuremath{\mathsf{{H}}}\xspace}

\newcommand{\zz}[0]{\ensuremath{\mathsf{{Z}}\xspace}}

\newcommand{\ps}[0]{PS\xspace}

\newcommand{\var}[1]{\ensuremath{\mathit{var(#1)}}\xspace}
\renewcommand{\vec}[1]{\ensuremath{\mathbf{#1}}\xspace}

\newcommand{\kett}[1]{\ensuremath{|#1 \rangle}}

\renewcommand{\ket}[2]{\ensuremath{|#1 \rangle_{#2}}}

\newcommand{\toolname}[0]{\textsc{QuPRS}\xspace}
\newcommand{\symp}[0]{\textsc{Symengine}\xspace \cite{symengine}\xspace}
\newcommand{\symf}[0]{\textsc{Sympy}\xspace\cite{meurer2017sympy}\xspace}
\newcommand{\feymann}[0]{\textsc{Feymann}\xspace}
\newcommand{\quokka}[0]{\textsc{Quokka\#}\xspace}

\newcommand{\qiskit}[0]{\textsc{qiskit}\xspace\cite{qiskit2024} \xspace}

\newcommand{\wmc}[0]{\mathit{WMC}\xspace}

\providecommand{\vect}[1]{\ensuremath{( \begin{matrix} #1 \end{matrix} )}}

\newcommand{\pyzx}[0]{\textsc{PyZX}\xspace}
\newcommand{\tdd}[0]{\textsc{TDD}\xspace}

\newcommand{\sliqec}[0]{\textsc{SliQEC}\xspace}

\newcommand{\qcec}[0]{\textsc{Qcec}\xspace}

\newcommand\smat[1]{\ensuremath{\left[\begin{smallmatrix}#1\end{smallmatrix}\right]}}
\usepackage[table]{xcolor} 

\definecolor{mygreen}{HTML}{38FFF8}
\definecolor{myyellow}{HTML}{FFFE65}

\title{Equivalence Checking of Quantum Circuits via Path-Sum and Weighted Model Counting}

\author{
Wei-Jia Huang\inst{1}\thanks{wei.jia.huang.physics@gmail.com. The work was conducted while he was a researcher at Foxconn Research.} \and
Christophe Chareton\inst{2}\thanks{christophe.chareton@cea.fr} \and
Yu-Fang Chen\inst{3}\thanks{yfc@iis.sinica.edu.tw} \and
Kai-Min Chung\inst{3} \and
Min-Hsiu Hsieh\inst{4} \and
Alfons Laarman\inst{5} \and
Jingyi Mei\thanks{j.mei@liacs.leidenuniv.nl}\inst{5}
}

\institute{
Quanta Computer Inc., Taoyuan, Taiwan \and
CEA List, Paris-Saclay, France \and
Institute of Information Science, Academia Sinica, Taipei, Taiwan \and
Foxconn Research, Taipei, Taiwan \and
LIACS, Leiden University, Leiden, The Netherlands
}

\begin{document}


\maketitle
\begin{abstract}
    Equivalence checking of quantum circuits is a central verification task in quantum computing, ensuring the correctness of circuit optimizations, hardware mappings, and compilation pipelines. Among the primary symbolic methods for this purpose, the \emph{path-sum formalism} provides a compact representation with powerful reduction rules that yield a canonical form for 
    the classically simulable Clifford fragment, 
    but \emph{confluence} fails beyond the Clifford fragment. We introduce a new \emph{weighted model counting} (WMC) encoding for path-sums and combine it with the existing path-sum reductions to obtain a verifier that is both complete and efficient. Our method applies reductions whenever possible and invokes the WMC-based decision procedure on the residual path-sum, yielding a complete semantic check up to a global phase. We implement the approach and evaluate it on standard benchmarks. Results show that the hybrid method outperforms either component in isolation and competes with state-of-the-art~tools.    
\end{abstract}

\setcounter{tocdepth}{4}
\section{Introduction}  
Quantum computation stores and processes information in systems governed by quantum mechanics~\cite{nielsen2002quantum}. Quantum devices are rapidly maturing~\cite{arute2019quantum,acharya2024quantum}. Unlike classical data, quantum data cannot be cloned, and evolution is described by unitary operators. 
In practice, computations are specified by sequences of such unitaries, formalized as \emph{quantum circuits}. 
Quantum circuits are typically executed on a coprocessor architecture: a classical host emits instructions that initialize a quantum register and apply these gates.  

As these circuits are compiled for execution, they undergo successive transformations—optimizations, decompositions, and hardware mappings. Each transformation should preserve semantics, which makes \emph{circuit equivalence} a central verification task: checking whether two circuits generated at different stages of the toolchain induce the same unitary, possibly up to a global phase.

The exact circuit equivalence problem is \textbf{NQP}-hard~\cite{tanaka2010exact}, which makes a polynomial-time exhaustive algorithm unlikely. Work in this area has focused on two main approaches. The first is \emph{equational reasoning}, which chooses a symbolic representation together with an equational theory and searches for a rewriting path between circuit representations~\cite{coecke2017picturing,Coecke2011interacting,amy2018towards,vilmart2021structure,clement2022lov}. This approach can be efficient when short proofs exist, but its completeness and confluence properties degrade outside restricted fragments. The second is \emph{model-based reasoning}, which encodes equivalence as a decision problem on a semantic model and solves it with a complete, though often expensive, procedure~\cite{burgholzer2020advanced,sistla2023symbolic,mei2024equivalence,chen2023pldi,chen2023cav,chen2025popl}.

In this work, we study both approaches in concrete form. For equational reasoning, we use the \emph{path-sum calculus}~\cite{amy2018towards}, an equational system that represents circuits as algebraic sums over computational paths. This enables algebraic rewriting rules to simplify path-sum formulae and prove equivalence when a short derivation exists. For model-based reasoning, we introduce a new use of \emph{weighted model counting} (WMC).
To highlight the contrast, we present two simple examples. The path-sum method shows how algebraic rewrites can prove that $\Hg(1) \Hg(1)$ equals the identity, where $\Hg(1)$ means applying the Hadamard gate H gate to the first qubit. We then illustrate our WMC encoding using the same example for simplicity, noting that the same encoding also applies in cases where algebraic rewriting alone is insufficient.

\begin{example}[Path-sum reasoning]
Consider the single-qubit circuit $C = \Hg(1) \Hg(1)$, where $\Hg$ is the Hadamard gate.
The path-sum semantics of $\Hg(1)$ is
\[
\Hg(1) \colon \kett{x} \mapsto \tfrac{1}{\sqrt{2}} \sum_{y \in \{0,1\}} (-1)^{x \cdot y} \kett{y}.
\]
This matches the standard semantics of $\Hg(1)$, which maps $\kett{0}$ to $\tfrac{1}{\sqrt{2}} (\kett{0} + \kett{1})$ and $\kett{1}$ to $\tfrac{1}{\sqrt{2}} (\kett{0} - \kett{1})$. The path-sum formula describes how the basis state $\kett{x}$ contributes to each basis state $\kett{0}$ and $\kett{1}$ after the gate execution.

Composing two $\Hg$ gates (see formula \ref{comp} for formal definition) yields

\begin{equation}
\Hg \Hg \colon \kett{x} \mapsto \tfrac{1}{2} \sum_{y,z \in \{0,1\}} (-1)^{x \cdot y + y \cdot z} \kett{z}.\label{ex:HH}
\end{equation}

Using the [\Hg\Hg] rewrite rules that will be introduced in Section~\ref{sec:background:pathsums}, the above path-sum simplifies to $\kett{x} \mapsto\kett{x}$, proving that $\Hg \Hg$ is the identity.
\end{example}

Before giving the corresponding WMC example, we recall the definition of weighted model counting.

\begin{definition}[Weighted Model Counting]
Let $\Gamma$ be a Boolean formula over variables $V$, and let 
$W : V \times \{0,1\} \to \mathbb{A}$ be a weight function that assigns
to each literal $(v,b)$ a weight $W(v,b)$ in some commutative semiring $\mathbb{A}$.
The \emph{weighted model count} of $(\Gamma,W)$ is defined as
\[
\wmc(\Gamma,W) \;=\;
\sum_{\tau \models \Gamma} \; \prod_{v \in V} W\!\big(v,\tau(v)\big),
\]
where the sum ranges over all satisfying assignments $\tau : V \to \{0,1\}$ of $\Gamma$.
\end{definition}

\begin{example}[Reducing Path-Sum Reasoning to Weighted Model Counting]
Rather than applying rewrite rules, we encode the problem of checking whether the path-sum for $\Hg(1) \Hg(1)$ equals the identity as a weighted model counting instance.

We start with the original variables $x,y,z$ and introduce two auxiliary Boolean variables $s_1$ and $s_2$, representing the products $x \cdot y$ and $y \cdot z$, respectively. The Boolean formula fed into the weighted model counter is
\[
\Gamma \;\equiv\; (s_1 \leftrightarrow (x \wedge y)) \;\wedge\; (s_2 \leftrightarrow (y \wedge z)) \;\wedge\; (x \leftrightarrow z).
\]

The weight function is defined as follows. For Boolean variable $b\in\{0,1\}$,
\[
W(v,b) =
\begin{cases}
1 & \text{if $v \in \{x,y,z\}$}, \\[4pt]
(-1)^b & \text{if $v \in \{s_1,s_2\}$}.
\end{cases}
\]

Below, we explain the idea behind the encoding $\Gamma$ and weight function $W$. First, observe that all integer variables in Equation~(\ref{ex:HH}) are over the domain $\{0,1\}$, in this introduction, we slightly abuse the notation and treat them simultaneously as Boolean and integer variables. So the two Boolean constraints $s_1 \leftrightarrow (x \wedge y)$ and $s_2 \leftrightarrow (y \wedge z)$ also have a corresponding integer interpretation $s_1=x\cdot y$ and $s_2 = y \cdot z$.
Hence the amplitude of $\kett{z}$ in Equation~(\ref{ex:HH}) can be rewritten as
\[
\tfrac{1}{2} \sum_{y,z \in \{0,1\}} (-1)^{x \cdot y + y \cdot z} 
\;=\;
\tfrac{1}{2} \sum_{y,z \in \{0,1\}} (-1)^{s_1} \cdot (-1)^{s_2}.
\]
From the right-hand side of the above equation, it suffices to consider only the contributions of $s_1$ and $s_2$ to the amplitude of the target basis $\kett{z}$. Accordingly, we assign the weight $(-1)^{b_i}$ to the corresponding literal $(s_i,b_i)$, while all other literals are assigned weight one.

Finally, the constraint $x\leftrightarrow z$ ensures that we only consider amplitudes where $\kett{x}$ contributes to itself in the path-sum, i.e., the diagonal entries of the corresponding unitary matrix. It follows that the path-sum equals the identity up to global phase if and only if the absolute value of the weighted count is $2^1 \times e^{i\theta}$, where $e^{i\theta}$ is the \emph{global phase}, and this example has only one qubit, which matches the diagonal sum of a $2 \times 2$ identity matrix.

\end{example}

To summarize, we propose a hybrid method that integrates equational reasoning with model-based reasoning. Our method
applies reductions whenever possible and invokes the WMC-based decision procedure on the residual path-sum, yielding a complete semantic check up to a global phase.
Our approach is built on the \emph{path-sum} framework~\cite{amy2018towards}, where the unitary evolution of a quantum system from state $A$ to state $B$ is expressed as a weighted sum over all possible paths. 
Our main contributions are:

\begin{itemize}
    \item A novel weighted model counting (WMC) encoding of the path-sum equivalence problem in the sense of Amy~\cite{amy2018towards}, enabling a complete semantic check.
    \item A hybrid verification tool that combines path-sum reductions with the WMC-based procedure. 
    \item An implementation and experimental evaluation on standard benchmarks, demonstrating that the hybrid approach consistently outperforms either reductions or WMC alone, and improves over simple strategy combinations.
\end{itemize}

\paragraph{Related work.}
Path-sum reduction produces phase polynomials with symbolic path conditions, a structured algebraic object that cannot be encoded using any existing WMC formulation, including Quokka\#’s per-gate encoding for quantum circuit equivalence testing~\cite{mei2024simulating,mei2024equivalence,mei2024disentangling}.

Our contribution is a new encoding that translates these reduced forms into weighted Boolean variables and constraints. This allows WMC to operate after reduction, which was previously impossible. Without this encoding, model counting cannot even express the intermediate objects produced by path-sum.

Two main lines of research exist for quantum circuit equivalence checking: \emph{equational reasoning} and \emph{model-based approaches}.  

Equational reasoning frameworks, such as path-sum~\cite{amy2018towards,amy2023complete,amy2025POPL,chareton2021automated,ricciardi2025quantum} and the ZX-calculus~\cite{coecke2017picturing,duncan2009graph,jeandel2018lics,duncan2010rewriting}, establish equivalence by algebraic or diagrammatic rewriting. These methods are sound, and in principle, they can be made complete. Yet completeness forces reductions to raw matrix operations, which are only of theoretical interest. Practical tools, therefore, trade completeness for efficiency, applying partial but efficient reasoning procedures.

Model-based approaches, by contrast, construct explicit symbolic models of quantum states and circuits using decision diagrams~\cite{burgholzer2020advanced,hong2022tensor,hong2024equivalence,roland2022}, tree automata~\cite{chen2023cav,chen2025cacm,chen2025popl,chen2025tacas}, SMT~\cite{Bauer2023symQV,chen2023cade}, or weighted model counting~\cite{mei2024simulating,mei2024equivalence,mei2024disentangling}. These techniques are typically complete. They may be slower than equational methods on simple circuits, but they scale more robustly on challenging cases, often outperforming equational methods when completeness becomes impractical.

Combining equational reasoning with model-based approaches offers the potential for better synergy.  
An early attempt was made by \qcec~\cite{burgholzer2020advanced}, which combines ZX-calculus with decision diagrams in a portfolio approach. 
This design, however, runs two solvers in parallel without interaction.  
Another approach to combine both paradigms appeared in the context of symbolic variational circuits~\cite{peham2023equivalence}. This approach complements the incomplete ZX approach with a complete method,
but this approach has not yet been evaluated on general circuit equivalence checking. 



\section{Background}
\label{sec:background}

\subsection{Quantum preliminaries}
\label{sec:background:quantum}

We present only the essentials due to space constraints; for a comprehensive treatment, see~\cite{nielsen2002quantum}.

A one-\emph{qubit} quantum state is a unit vector $\smat{\alpha_0\\\alpha_1}\in\mathbb{C}^2$ written in Dirac notation as
$\alpha_0\kett{0}+\alpha_1\kett{1}$ with $|\alpha_0|^2+|\alpha_1|^2=1$. An $n$-qubit quantum state can be written as $\kett{\psi}=\sum_{x\in\{0,1\}^n}\alpha_x\kett{x}$, with the sum of absolute square value of all probabilistic amplitudes $\sum_{x\in\{0,1\}^n}|\alpha_x|^2=1$.
The dynamics of a quantum state are governed by \emph{unitary operators}.
A unitary $U$ is a linear map on $\mathcal{H}$ with $U^\dagger U=I$, where
$U^\dagger$ denotes the \emph{conjugate transpose} of $U$. Applying the unitary gate $U$ to $\kett{\psi}$ produces the state $U\kett{\psi}$.
Elementary quantum \emph{gates} are fixed unitaries acting on a small number of qubits (one or two).  
Common examples include the Hadamard gate
$\Hg(1)\kett{x}=\tfrac{1}{\sqrt{2}}(\ket{0}+(-1)^x\kett{1})$, the controlled-NOT gate
$\CNOT(1,2)\kett{x,z}=\kett{x,x\oplus z}$, the rotation gate $\Rz(\theta)(1)\kett{x} = e^{i\theta\cdot x}\kett{x}$, and the controlled-rotation gate $\CRz(\theta)(1,2)\kett{x,z} = e^{i\theta\cdot x\cdot z}\kett{x,z}$.


A \emph{quantum circuit} is a finite sequence of gates; each gate is a unitary acting on one or more qubits, and the whole circuit denotes a unitary matrix $U\in\mathbb{C}^{2^n\times 2^n}$.
Parallel composition of two circuits corresponds to their \emph{tensor product} and sequential composition to their \emph{matrix multiplication}.
Two circuits $C_1,C_2$ on the same $n$ qubits are  \emph{equivalent} if $U[C_1]=U[C_2]$, where $U[C]$ means the unitary matrix of the circuit $C$. They are \emph{equivalent up to global phase} if $U[C_1]=e^{i\theta}U[C_2]$ for some real $\theta$.
This is the standard notion for circuit equivalence checking.
Equivalently, to check if two circuits $C_1$ and $C_2$ are equivalent, it suffices to verify that $U[C_2]U[C_1]^\dagger=I$. They are equivalent up to global phase iff $U[C_2]U[C_1]^\dagger=e^{i\theta}I$ for some real $\theta$.

\subsection{Path-sum formalism}
\label{sec:background:pathsums}
\emph{Path-sums} provide a symbolic representation of a quantum circuit's semantics as sums over Boolean ``paths''~\cite{amy2018towards,amy2023complete}.
Formally, for a sequence of state variables $\vec{x}=\{x_1,\ldots,x_n\}$ (where $n$ is the number of qubits), a path-sum is a triple $\mathcal{P} =\langle \vec{y},\,\Phi,\,\mathcal{O}\rangle$ where
(i) $\vec{y}=\{y_1,\ldots, y_m\}$ is the set of Boolean \emph{path variables},
(ii) $\Phi:\mathbb{B}^{|\vec{x}|+|\vec{y}|}\mapsto \mathbb{R}$ is a phase polynomial with real coefficients, and
(iii) $\mathcal{O}:\mathbb{B}^{|\vec{x}|+|\vec{y}|}\mapsto \mathbb{B}^{|\vec{x}|}$ is an output function that maps path and state variables to a Boolean vector describing the output state. 
Its action on a basis state $\kett{\vec{x}}$ is given as
\begin{equation}
\label{psgen}
\ket{\vec{x}}
~\mapsto~
\frac{1}{\sqrt{2^{|\vec{y}|}}}\!\sum_{\vec{y}\in\{0,1\}^{|\vec{y}|}}
e^{i\,\Phi(\vec{x},\vec{y})}\;\ket{\mathcal{O}(\vec{x},\vec{y})}.
\end{equation}

Elementary gates admit concise path-sum rules. For example,
writing $z_k$ to denote the $k$th element of a vector $\vec z$, we have:
\[
\begin{aligned}
\Hg(t):\quad & \langle \vec{y},\Phi,\mathcal{O}\rangle
~\mapsto~
\big\langle \vec{y}\!\cup\!\{y_f\},\;\Phi+ \widetilde{x_t} \widetilde{y}_f
\pi,\;\mathcal{O}[x_t\!\leftarrow\! y_f]\big\rangle,& (y_f \text{ is fresh})\\
\CNOT(c,t):\quad &
\langle \vec{y},\Phi,\mathcal{O}\rangle
~\mapsto~
\langle \vec{y},\Phi,\mathcal{O}[x_t\!\leftarrow\! x_c\oplus x_t]\rangle,\\
\Rz(\theta)(t):\quad & 
\langle \vec{y},\Phi,\mathcal{O}\rangle
~\mapsto~
\langle \vec{y},\Phi+\widetilde{x_t} \theta,\mathcal{O}\rangle,\\
\CRz(\theta) (c,t):\quad &
\langle \vec{y},\Phi,\mathcal{O}\rangle
~\mapsto~
\langle \vec{y},\Phi+\widetilde{x_c}\widetilde{x_t} \theta,\mathcal{O}\rangle.
\end{aligned}
\]
where $t$ denotes the target qubit, $c$ denote the control qubit, $\widetilde{b}$ denotes \emph{bool-to-int casting}, interpreting XOR $\oplus$ as \(\widetilde{b_i \oplus b_j} := \widetilde{b_i} + \widetilde{b_j} - 2 \cdot \widetilde{b_i} \cdot \widetilde{b_j}\) and NOT $\neg$ as $\widetilde{\neg b} := 1-\widetilde{b}$ \footnote{In practice, we omit the explicit casting notation  for  atomic binary integer values}
and 
\[
\mathcal{O}[x_t\!\leftarrow\!\sigma](\vec{x},\vec{y})_j =
\begin{cases}
\mathcal{O}(\vec{x},\vec{y})_j, & \text{if } t\neq j,\\
\sigma, & \text{if } t=j.
\end{cases}
\]

\begin{example}
    The path-sum triple for the circuit $\Hg(1)\Hg(1)$ is constructed as follows:
    \[
    \langle \emptyset,0,\kett{x}\rangle
    \xrightarrow[]{\Hg(1)} \langle \{y_0\},xy_0\pi,\kett{y_0}\rangle
    \xrightarrow[]{\had(1)}
    \langle \{y_0,y_1\},xy_0\pi+y_0y_1\pi,\kett{y_1}\rangle.
    \]
    Returning to Dirac notation, the action of the circuit on each basis state $\kett{x}$ is
    \[
    \kett{x} \mapsto \frac{1}{\sqrt{2^{|\{y_0,y_1\}|}}}\sum_{y_0,y_1\in\{0,1\}} e^{i(xy_0\pi+y_0y_1\pi)}\kett{y_1}
    =
    \frac{1}{2}\sum_{y_0,y_1\in\{0,1\}} (-1)^{(xy_0+y_0y_1)}\kett{y_1},
    \]
    which coincides with the expression in Example~\ref{ex:HH}.\qed
\end{example}




Composition of circuits corresponds to the symbolic composition of path-sums by substituting outputs and adding phases as follows.
\begin{equation}
\label{comp}
\langle \vec{y}',\Phi',\mathcal{O}'\rangle \circ \langle \vec{y},\Phi,\mathcal{O}\rangle
\;=\;
\big\langle \vec{y}\!\cup\!\vec{y}',\;\Phi(\vec{x},\vec{y})+\Phi'(\mathcal{O}(\vec{x},\vec{y}),\vec{y}'),\;
\mathcal{O}'(\mathcal{O}(\vec{x},\vec{y}),\vec{y}')\big\rangle
\end{equation}

The path-sum of the identity circuit is $\langle \emptyset, 0, \kett{x}\rangle$.
Any quantum circuit can be expressed as a path-sum by starting from this identity form and composing it step by step with each gate in the circuit.

However, this gate-by-gate construction can yield different formulas representing the same vector semantics as in Equation~\eqref{psgen}. For instance, applying two consecutive Hadamard gates to the same qubit yields  the path-sum 
\[\pasum{\vec{y}}{\Phi}{\mathcal{O}}{}  \xrightarrow[]{\Hg(i)\Hg(i)}\pasum{\vec{y}\cup \{y_0,y_1\}}{\Phi + x_iy_0\pi + y_0y_1\pi}{\mathcal{O}[x_i\leftarrow y_0][x_i\leftarrow y_1]}{} \]
Here, $y_0$ and $y_1$ are fresh variables not appearing in the original path-sum.

Since the Hadamard operation is involutive, $\Hg\Hg$ is equivalent to the identity, we also have the equivalent and much simpler path-sum
\[\pasum{\vec{y}}{\Phi}{\mathcal{O}}{}  \xrightarrow[]{\Hg(i) \Hg(i)}\pasum{\vec{y}}{\Phi}{\mathcal{O}}{} \]

To formalize this equivalence, the path-sum framework is equipped with an \emph{equational theory}, that is, a collection of reduction rules generating an equivalence relation compatible with the path-sum semantics.

\begin{figure}[htbp]
    \centering
\begin{mdframed}[linecolor=black, linewidth=.5pt]
\begin{prooftree}
     \AxiomC{$y_0 \notin (\var{\mathcal{O}}\cup \var{Q} \cup \var{R})$}
    \AxiomC{$y_1 \notin \var{Q}$}
    \LeftLabel{[\Hg\Hg]}
    \BinaryInfC{$\pasum{\vec{y}\cup\{y_0,y_1\}}{(\frac{1}{2}y_0(y_1 + Q)+R)\cdot 2\pi}{\mathcal{O}}{} \equiv \pasum{\vec{y}}{R[y_1 \leftarrow \overline{Q} ]\cdot 2\pi}{\mathcal{O}[y_1 \leftarrow  \overline{Q}]}{}$} 

\DisplayProof\vskip 1em
    \AxiomC{$y_0 \notin (\var{\mathcal{O}}\cup \var{Q} \cup \var{R})$}
    \LeftLabel{[$\omega$]}
    \UnaryInfC{$\pasum{\vec{y}\cup\{y_0\}}{(\frac{1}{4}y_0 + \frac{1}{2}y_0Q+R)\cdot 2\pi}{\mathcal{O}}{} \equiv \pasum{\vect{y}}{(\frac{1}{8}-\frac{1}{4}\overline{Q}+R)\cdot 2\pi}{\mathcal{O}}{}$}
\end{prooftree}
\end{mdframed}
\caption{The path-sum reduction rules~\cite{amy2018towards}, where $Q$ is an integer formula and $R$ a real formula. $\overline{Q}$ computes the modulo 2 integer value, and we also abuse the notation to denote the corresponding Boolean value, i.e., one maps to true and zero to false. We refer the readers to~\cite{amy2018towards} for the $\overline{Q}$ construction.}\label{fig:path-sum-reduction-rules}
\end{figure}
The main reduction rules are shown in Fig.~\ref{fig:path-sum-reduction-rules}.
Rule [\Hg\Hg]
 symbolically identifies and cancels pairs of paths that contribute equal amplitudes but with opposite phases.
Similarly, rule [$\omega$] captures a different kind of symbolic simplification: when two amplitude vectors differ by a $\pi/2$ phase, their sum corresponds to a \emph{bisector} direction with normalized amplitude.\footnote{If two complex amplitudes $a$ and $a e^{i\pi/2}$ have equal magnitude but differ by a phase of $90^\circ$, their sum is $a(1+i) = \sqrt{2}a e^{i\pi/4}$. The resulting vector points in the direction halfway between them in the complex plane, called the \emph{bisector}.} The rule abstracts this geometric addition at the symbolic level. We refer readers to \cite{amy2018towards} for the detailed derivations of these equations. Readers may safely skip them on a first reading without losing track of our main technical contribution.

For \emph{Clifford circuits},\footnote{Clifford circuits are those generated by the Hadamard, Phase, and CNOT gates. They form the Clifford group, which is efficiently classically simulable~\cite{gottesman1998heisenberg}.} the reduction rules from Fig.~\ref{fig:path-sum-reduction-rules} are both \emph{complete} (any two equivalent circuits can be reduced to the same normal form)  
and \emph{confluent} (the reduction process always yields a unique normal form, regardless of the order of rule applications). 
Beyond that, for example, in the $Z^*$ fragment,\footnote{Circuits in gate set $\{H, CNOT, R_k=\left(\begin{smallmatrix} 1 & 0 \\ 0 & e^{\nicefrac{2\pi i}{2^k}} \end{smallmatrix}\right)\}$.} completeness can be obtained but confluence is lost~\cite{vilmart2023rewriting}. This means that while equivalence is always provable in principle, there is no deterministic strategy to find a proof, and practical verification must rely on heuristics.

In this paper, we present a complete method for handling quantum circuits generated by any set of gates, including the $Z^*$ fragment.  
We first apply the reduction rules (e.g., those in Fig.~\ref{fig:path-sum-reduction-rules}) to simplify the formula as far as possible.  
For the residual path-sum, we then show in the next section how equivalence checking (up to global phase) can be reduced to a weighted model counting problem.


\section{Technical Contribution}
\label{sec:technical}

Our goal is to verify the equivalence of two $n$-qubit circuits $C_1,C_2$. 
Due to reversibility of the circuits,
this reduces to checking whether 
\[
C := C_1C_2^\dagger
\]
is equal to $I$ (exact equivalence) or to $e^{i\theta} I$ for some real $\theta$ (equivalence up to global phase).
We construct the path-sum $\ps(C)$ and then proceed in two steps:
\begin{enumerate}
\item apply path-sum reductions (Fig.~\ref{fig:path-sum-reduction-rules}) to simplify $\ps(C)$, and 
\item encode the residual instance as a weighted model counting (WMC) problem and decide it exactly.
\end{enumerate}

Step~1 largely follows the standard path-sum calculus, but requires some refinements. 
In particular, reduction rules depend on recognizing syntactic patterns in path-sum formulae, yet many equivalent expressions do not match these patterns directly. 
In practical use, one needs to add additional rules to reorganize the phase polynomials and Boolean output functions to enable pattern matching. 
We also slightly relaxed the constraints for $\Rz(\theta)$ and $\CRz(\theta)$ gate constructions to permit arbitrary rotation angles. 
These relaxations integrate smoothly with the path-sum formalism of~\cite{amy2018towards}, but, as expected, the reduction rules cannot handle interference on arbitrary angles and therefore remain incomplete.
This motivates Step~2, where we derive a sound and complete equivalence-checking procedure for path-sum formulae through a reduction to WMC.


\begin{example}
\label{runex1}
We illustrate a simple case of gate simulation. 
The controlled rotation gate $\CRz(\pi)$ (Fig.~\ref{runex:intro}) can be simulated by a $\CNOT$ gate conjugated with Hadamard gates on its target qubit (Fig.~\ref{runex:simu}).
We refer to the circuit in Fig.~\ref{runex:intro} as $C_1$ and the one in Fig.~\ref{runex:simu} as $C_2$.
The two circuits $C_1$ and $C_2$ are \emph{semantically equivalent}.
\begin{figure}
\centering
\begin{subfigure}{.5\textwidth}
\centering
\begin{quantikz}[row sep=0.3cm,column sep=.3cm,wire types={q,q}]
\lstick{$\ket{x}{}$}&
 &\ctrl{1}&&
\\ 
\lstick{$\ket{z}{}$}&
 &\gate{\zz}&&
\\ 
\end{quantikz}
\label{runex:zrot}

\caption{The controlled-rotation $C_1:=\CRz(\pi)(1,2)$}
\label{runex:intro}
\end{subfigure}
 \hspace{1cm}
 \begin{subfigure}{.4\textwidth}
\centering
\begin{quantikz}[row sep=0.3cm,column sep=.3cm,wire types={q,q}]
\lstick{$\ket{x}{}$}
&
 &\ctrl{1}&&
\\ 
\lstick{$\ket{z}{}$}&\gate{\had}
 &\targ{}
  &\gate{\had}
 &
\end{quantikz}
\caption{The controlled Z rotation:  Clifford simulation  through circuit $C_2 :=\had(2) \CNOT(1,2) \had(2)$}
\label{runex:simu}
 \end{subfigure}
\caption{Running example, simulating Control Z rotation with Clifford gates}
\end{figure}
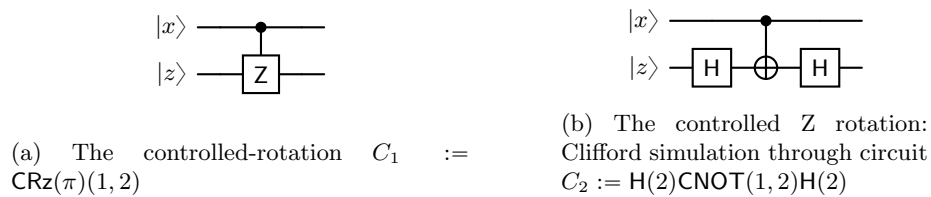
To check the equivalence between $C_1$ and $C_2$,
since $U[C_2]^\dagger = U[C_2]$ based on its definition, Lemma~\ref{lem:identity}
instantiates as:
\begin{equation}C_1
\equiv C_2 \textit{ iff for all }\vec x \in \{0,1\}^2, \ps(C_1C_2^{\dagger})\ket{\vec x}{} = \ket{\vec x}{}.
\label{eq:formal_problem_instance_runex}\end{equation}
Based on the composition rule from Equation~(\ref{comp}), we have the composed circuit in path-sum formalism as follows:
\begin{align}
\ps(C_1;C_2)
&= \pasum{\emptyset}{0}{\kett{x,z}}{}\notag\\&\xrightarrow[]{\mathrm{CR}_Z(\pi)(1,2)}
   \pasum{\emptyset}{\pi x z}{\kett{x,z}}{}\notag\\[2pt]
&\xrightarrow[]{H(2)}
   \pasum{\{y_0\}}{\pi\,(x z + z y_0)}{\kett{x,y_0}}{}\notag\\[2pt]
&\xrightarrow[]{\mathrm{CNOT}(1,2)}
   \pasum{\{y_0\}}{\pi\,(x z + z y_0)}{\kett{x,\,x\oplus y_0}}{}\notag\\[2pt]
&\xrightarrow[]{H(2)}
   \pasum{\{y_0,y_1\}}{\pi\!\big(x z + z y_0 + \widetilde{(x\oplus y_0)}\,y_1\big)}{\kett{x,y_1}}{}\notag\\[2pt]
&= \pasum{\{y_0,y_1\}}{\pi\!\big(x z + z y_0 + (x + y_0 - 2 x y_0)\,y_1\big)}{\kett{x,y_1}}{}\label{eq:ps_before}\\[2pt]
&\equiv
   \pasum{\{y_0,y_1\}}{\pi\!\big(x z + z y_0 + (x + y_0)\,y_1\big)}{\kett{x,y_1}}{} \quad (\mathrm{mod}\;2\pi).\label{eq:ps_after}
\end{align}
 Note that for the last step,
 since $e^{-2xy_0\pi} = 1$, we can further reduce the resulting path-sum from Equation~(\ref{eq:ps_before}) to Equation~(\ref{eq:ps_after}).
 To apply [\Hg\Hg] reduction, one has to rearrange the term $(x z + z y_0 + (x + y_0)\,y_1)\cdot \pi$ to be in the form of the polynomial
 $(\frac{1}{2}y_0(y_1 + z) + \frac{1}{2}x(y_1 + z))\cdot 2\pi$,
 where $Q=z$, $R=\frac{1}{2}x(y_1 + z)$, $\mathcal{O}(\vec x, \vec y)=(x,y_1)$.
 So $y_0\notin var(Q)\cup var(R) \cup var(\mathcal{O})$,
 the phase polynomial is then reduced to be $\pasum{\emptyset}{R[y_1\leftarrow\bar Q]\cdot 2\pi}{\mathcal{O}[y_1\leftarrow \bar Q]}{}$,
 where $R[y_1\leftarrow\bar Q]\cdot 2\pi=\frac{1}{2} x(z+z)\cdot 2\pi \equiv 0 \quad(\mathrm{mod} \;2\pi)$ and $\mathcal{O}(\vec x, \vec y)[y_1\leftarrow z]=\kett{x,z}$.

 Thus we have the reduction:
Equation~$(\ref{eq:ps_after}) \xrightarrow[]{[\Hg\Hg]}\pasum{\emptyset}{0}{\kett{x,z}}{}$.\qed

\end{example}

In \cite{amy2018towards}, completeness is established only for the Clifford fragment of the calculus. To extend it to general quantum circuits, we reduce the problem to weighted model counting (WMC). 


\subsection{Encoding path-sum equivalence into WMC}
\label{sec:technical_contrib:wmc}
We reduce equivalence to a check on diagonal amplitudes. For
\({C}=C_2C_1^\dagger\) and basis \(\kett{\vec x}\), we assume $\ps({C})$ is
\[
\kett{\vec x}
~\mapsto~\frac{1}{\sqrt{2^{|\vec{y}|}}}\sum_{\vec y\in\{0,1\}^{|\vec{y}|}}
e^{i\,\Phi(\vec x,\vec y)}\ket{\mathcal{O}(\vec x,\vec y)}.
\]
and slightly abuse notation and write $\ps(C)$ also for its \emph{matrix representation}. Formally, it defines the matrix $\frac{1}{\sqrt{2^{|\vec{y}|}}}\sum_{\vec y\in\{0,1\}^{|\vec{y}|}}
e^{i\,\Phi(\vec x,\vec y)}\kett{\mathcal{O}(\vec x,\vec y)}\bra{\vec x}$.

Because $\ps(C)$ is unitary, for each basis state $\kett{\vec x}$, the output $\ps(C)\kett{\vec x}$ has $L_2$-norm equals 1. 
Therefore $\ps(C)\kett{\vec x}=\kett{\vec x}$ iff its \emph{diagonal amplitude} satisfies 
$\langle \vec x|\ps(C)|\vec x\rangle=1$. 
For equivalence up to a global phase, it suffices to require 
$\langle \vec x|\ps(C)|\vec x\rangle=e^{i\theta}$ 

\begin{lemma}\label{lem:identity}
$\ps(C)$ is the identity if and only if, 
for every basis state $\kett{\vec x}$, its diagonal amplitude satisfies
\[
\langle\vec x|\ps(C)|\vec x\rangle = 1.
\]
\end{lemma}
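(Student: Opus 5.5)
The forward direction is immediate. If $\ps(C)$ is the identity matrix, then $\langle\vec x|\ps(C)|\vec x\rangle=\langle\vec x|\vec x\rangle=1$ for every basis state $\kett{\vec x}$. The real content is the converse, and I would argue it column by column, using only that $\ps(C)$ is unitary. This holds because $\ps(C)$ represents the circuit $C=C_2C_1^\dagger$, and composing path-sums as in Equation~(\ref{comp}) preserves the matrix semantics of Equation~(\ref{psgen}).

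Fix a basis vector $\kett{\vec x}$ and write the output state as
\[
\kett{\psi_{\vec x}}:=\ps(C)\kett{\vec x}=\sum_{\vec w\in\{0,1\}^n}\alpha_{\vec w}\kett{\vec w},
\qquad \alpha_{\vec w}=\langle\vec w|\ps(C)|\vec x\rangle .
\]
Unitarity gives $\|\kett{\psi_{\vec x}}\|_2=\|\kett{\vec x}\|_2=1$, that is, $\sum_{\vec w}|\alpha_{\vec w}|^2=1$. The hypothesis says $\alpha_{\vec x}=1$, so $|\alpha_{\vec x}|^2=1$ and hence $\sum_{\vec w\neq\vec x}|\alpha_{\vec w}|^2=0$. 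Every off-diagonal entry in column $\vec x$ is therefore zero, and $\kett{\psi_{\vec x}}=\kett{\vec x}$.

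Since this holds for every $\vec x\in\{0,1\}^n$, the matrix $\ps(C)$ agrees with $I$ on a basis. By linearity, $\ps(C)=I$.

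I do not expect a real obstacle. The one point to handle with care is justifying unitarity, that is, that the matrix defined by the path-sum triple really equals $U[C]$. I would take this from the path-sum semantics of Section~\ref{sec:background:pathsums}: each gate rule and the composition rule (\ref{comp}) are sound with respect to Equation~(\ref{psgen}), so an induction on the gate sequence suffices. The same argument gives the global-phase variant. If every diagonal entry equals $e^{i\theta}$ with $|e^{i\theta}|=1$, the same norm argument forces $\ps(C)=e^{i\theta}I$; this requires $\theta$ to be the same for all $\vec x$.
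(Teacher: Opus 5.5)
Your proof is correct and follows the same route as the paper, which justifies the lemma in the paragraph just before it: unitarity makes each column $\ps(C)\kett{\vec x}$ a unit vector, so a diagonal entry equal to $1$ forces every off-diagonal entry in that column to vanish. Your extra remarks, on why the path-sum matrix is unitary and on the global-phase variant needing the same $\theta$ for every $\vec x$, are sound and make explicit what the paper leaves implicit.
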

\begin{corollary}
\label{corollary:eq}
Let $\ps(C)$ act on $n$ qubits. 
Then $\ps(C)$ is equivalent to identity up to global phase if and only if
\[
\sum_{\vec{x}\in\{0,1\}^n} \langle \vec{x}\,|\,\ps(C)\,|\,\vec{x}\rangle \;=\; 2^n\cdot e^{i\theta} \text{ for a given }\theta.
\]
\end{corollary}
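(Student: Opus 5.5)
The plan is to reduce the statement to Lemma~\ref{lem:identity}, applied to the rescaled operator $e^{-i\theta}\ps(C)$, together with a norm argument that turns a condition on the trace into a condition on each diagonal entry. Write $U$ for the matrix of $\ps(C)$. Its diagonal entries are $d_{\vec x}=\langle \vec x|U|\vec x\rangle$ for $\vec x\in\{0,1\}^n$, so the sum in the statement is the trace $\mathrm{tr}(U)=\sum_{\vec x} d_{\vec x}$.

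The forward direction is immediate. If $U=e^{i\theta}I$, then every $d_{\vec x}=e^{i\theta}$, and summing over the $2^n$ basis states gives $2^n e^{i\theta}$.

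For the converse, assume $\sum_{\vec x} d_{\vec x}=2^n e^{i\theta}$.
\begin{enumerate}
\item Since $U$ is unitary, each column $U\kett{\vec x}$ has $L_2$-norm $1$, as noted just before Lemma~\ref{lem:identity}. Cauchy--Schwarz then gives $|d_{\vec x}|=|\langle \vec x|U\kett{\vec x}\rangle|\le 1$ for every $\vec x$.
\item The triangle inequality gives
\[
2^n=\Big|\sum_{\vec x} d_{\vec x}\Big|\le\sum_{\vec x}|d_{\vec x}|\le 2^n .
\]
Both inequalities are therefore equalities. Hence $|d_{\vec x}|=1$ for every $\vec x$, and all the $d_{\vec x}$ are nonnegative real multiples of one common unit complex number.
\item A common direction together with unit moduli forces $d_{\vec x}=e^{i\theta}$ for all $\vec x$.
\item The operator $e^{-i\theta}U$ is unitary and has all diagonal amplitudes equal to $1$. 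Lemma~\ref{lem:identity} then yields $e^{-i\theta}U=I$, that is, $U=e^{i\theta}I$.
\end{enumerate}

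The main obstacle is step~2: converting the single scalar condition on the trace into per-entry information. The argument rests on the equality case of the triangle inequality combined with the bound $|d_{\vec x}|\le 1$ coming from unitarity. One point to state carefully is that Lemma~\ref{lem:identity} may be invoked for $e^{-i\theta}U$. This is legitimate because that lemma uses only unitarity, via the norm-$1$ columns, and a global phase preserves unitarity.

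A second point is the quantifier on $\theta$. Here $\theta$ is whatever real number makes $\mathrm{tr}(U)=2^n e^{i\theta}$, so the converse holds for every such $\theta$. In practice the test is therefore equivalent to checking the single condition $|\mathrm{tr}(U)|=2^n$.
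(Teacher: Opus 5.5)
Your proposal is correct and follows the paper's proof step for step. The forward direction is immediate. For the converse, the unitarity bound $|\langle \vec x|\ps(C)|\vec x\rangle|\le 1$ combines with the equality case of the chain $2^n\le\sum_{\vec x}|\langle \vec x|\ps(C)|\vec x\rangle|\le 2^n$ to force every diagonal entry to equal $e^{i\theta}$, and Lemma~\ref{lem:identity} is then applied to $e^{-i\theta}\ps(C)$. Your explicit Cauchy--Schwarz bound and your justification for invoking the lemma on the rescaled unitary fill in points the paper leaves implicit.
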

\begin{proof}
The ``only if'' direction is immediate: if $\ps(C)=e^{i\theta} \cdot I$, for every $\kett{\vec x}$,
$\langle \vec x|\ps(C)|\vec x\rangle = e^{i\theta}\cdot\langle \vec x|\vec x\rangle = e^{i\theta}$, thus $\sum_{\vec{x}\in\{0,1\}^n} \langle \vec{x}\,|\,\ps(C)\,|\,\vec{x}\rangle \;=\; 2^n\cdot e^{i\theta}$.

For the ``if'' direction, suppose $\sum_{\vec x\in \{0,1\}^n}\langle \vec x|\ps(C)|\vec x\rangle=2^n\cdot e^{i\theta}$.  
Since $\ps(C)$ is unitary, for each basis state $\vec x$ we have 
$|\langle \vec x|\ps(C)|\vec x\rangle|\le 1$. Therefore,
\[
2^n
= \Big|\sum_{\vec x\in \{0,1\}^n}\langle \vec x|\ps(C)|\vec x\rangle\Big|
\;\le\; \sum_{\vec x\in\{0,1\}^n}\big|\langle \vec x|\ps(C)|\vec x\rangle\big|
\;\le\; \sum_{\vec x\in\{0,1\}^n} 1
= 2^n.
\]
Since the first and last terms of the chain are equal, both inequalities must in fact be equalities, Thus $|\langle \vec x|\ps(C)|\vec x\rangle|=1$ for every $\vec x$ and since the modulus of total sum is the maximum $2^n$, 
that global phase must be the same, i.e.\ $\langle \vec x|\ps(C)|\vec x\rangle=c$ for all $\vec x$. 
So $\langle \vec x|e^{-i\theta}\cdot\ps(C)|\vec x\rangle=1$.
Based on Lemma~\ref{lem:identity}, 
we conclude $e^{-i\theta}\cdot\ps(C)= I$ thus $\ps(C)= e^{i\theta}\cdot I$.\qed
\end{proof}

Next we show how to use WMC to compute the sum of the diagonal values $\sum_{\vec{x}\in\{0,1\}^n} \langle \vec{x}\,|\,\ps(C)\,|\,\vec{x}\rangle$.
For the encoding, we first rewrite the \emph{phase polynomial} $\Phi(\vec{x},\vec{y})$ of $\ps(C)$ to a finite sum of monomial terms
\[
\Phi(\vec{x},\vec{y})=\sum_{j\in J} \Mon_j,
\qquad
\Mon_j = k_j\cdot \boolcond_j(\vec{x},\vec{y}).
\]
Here $J$ is the set of monomial index, each $\Mon_j$ is the product of a real coefficient $k_j$ 
and a Boolean condition $\boolcond_j(\vec{x},\vec{y})$. 
The Boolean condition indicates \emph{when} this monomial contributes: 
it evaluates to $1$ if the assignment $(\vec{x},\vec{y})$ activates the term, 
and $0$ otherwise.

\begin{example}\label{runex3}
Considering the path-sum in Equation~(\ref{eq:ps_before}) of Example~\ref{runex1}:
    \[\ps(C_2;C_1^\dagger) = \pasum{\{y_0, y_1\}}{{(xz+zy_0+{(x + y_0 - 2xy_0)}y_1)\cdot \pi}}{\ket{x,y_1}{}}{}\]
Here, the phase term can be written as ${(y_0y_1+y_0z+xy_1+xz-2xy_0y_1)\cdot \pi}$, which expands as a  monomials sum of 5 monomial terms, with the following definition and constraints:
\[\begin{array}{rcl|rcccl|rcccl}
\multicolumn{3}{c|}{\textbf{Monomial definition}}& \multicolumn{5}{c|}{\textbf{Boolean constraint}}& \multicolumn{5}{c}{\textbf{Real Coefficient}}\\
\Mon_1    &:= &  {y_0y_1\pi}&\boolcond_1(\vec{x},\vec{y})&:=& y_0\wedge y_1
&&&k_1&:=& \pi\\
  \Mon_2  &:= &  y_0z\pi&\boolcond_2(\vec{x},\vec{y})&:=& y_0\wedge z
  &&&k_2&:=& \pi\\
\Mon_3    &:= & xy_1\pi&\boolcond_3(\vec{x},\vec{y})&:=& x\wedge y_1
&&&k_3&:=& \pi\\
  \Mon_4 &:= & {xz\pi}&\boolcond_4(\vec{x},\vec{y})&:=& x\wedge z
  &&&k_4&:=& \pi \\
  \Mon_5 &:= & {-2xy_0y_1\pi}&\boolcond_5(\vec{x},\vec{y})&:=& x\wedge y_0 \wedge y_1
  &&&k_5&:=& -2\pi
\end{array}\]
\qed
\end{example}

To bring this into WMC form, we proceed monomial by monomial. 
For each $j\in J$ we introduce a fresh Boolean variable $s_j$, 
referred to as a \emph{monomial variable}, together with a constraint
\[
\gamma_j :\quad s_j \leftrightarrow \boolcond_j(\vec{x},\vec{y}).
\]
Thus, $s_j$ explicitly represents whether the Boolean condition 
$\boolcond_j(\vec{x},\vec{y})$ holds. 
We then attach to $s_j$ a literal weight function
\[
W(s_j,0)=1,
\qquad
W(s_j,1)=e^{i k_j}.
\]
Intuitively, if the condition is false, the term contributes nothing, while if it is true, we multiply by the corresponding root of unity. 
We set the weight of all other literals to 1, as they are irrelevant to the sum of diagonal amplitude values.

From
\[
\ps(C)\ket{\vec x}
=\tfrac{1}{\sqrt{2^{m}}}\sum_{\vec y} e^{2\pi i\,\Phi(\vec x,\vec y)}\ket{\mathcal O(\vec x,\vec y)},
\]
we obtain
\[
\langle \vec x|\ps(C)|\vec x\rangle
=\tfrac{1}{\sqrt{2^{m}}}\sum_{\vec y} e^{2\pi i\,\Phi(\vec x,\vec y)}
\langle \vec x|\mathcal O(\vec x,\vec y)\rangle
=\tfrac{1}{\sqrt{2^{m}}}\sum_{\vec y:\,\mathcal O(\vec x,\vec y)=\vec x} e^{2\pi i\,\Phi(\vec x,\vec y)}.
\]
Here, as observed in~\cite[Lemma 4.1]{amy2018towards}, 
we have that by orthonormality 
$\langle \vec x|\mathcal O(\vec x,\vec y)\rangle$
equals the Kronecker delta $\delta_{\mathcal O(\vec x,\vec y),\,\vec x}$, equals $1$ if $x=O(\vec x,\vec y)$ and $0$ otherwise.
Hence, in the WMC encoding, we include the formula
\[
\beta: \;\; \bigwedge_{i=1}^{n}\big(\mathcal O(\vec x,\vec y)_i\leftrightarrow x_i\big),
\]
so that only diagonal paths are counted.
Finally, we form the global Boolean constraint
\[
\Gamma := \beta \wedge\ \bigwedge_{j\in J}\gamma_j,
\]
which enforces both the path restriction and the link between each $s_j$ 
and its underlying condition.

With this construction, the sum of diagonal  amplitude is obtained as

\[
\sum_{\vec{x}\in\{0,1\}^n} \langle \vec{x}\,|\,\ps(C)\,|\,\vec{x}\rangle \;=\;  \tfrac{1}{\sqrt{2^{m}}}\,\mathrm{WMC}(\Gamma,W),
\]
where $\mathrm{WMC}(\Gamma,W)$ denotes the weighted model count of $\Gamma$ 
under the literal weights $W$.
\begin{example}
The equivalence instance in Example~\ref{runex3} can be encoded as a weighted
model counting (WMC) problem. Using notations from Example~\ref{runex3}:
\[
\Gamma \;=\; \bigwedge_{j=1}^{5}\!\bigl(s_j \leftrightarrow \boolcond_j\bigr)\;\wedge\;\bigl(y_1 \leftrightarrow z\bigr).
\]
Since \(k_j=\pi\) for all \(j\in\{1,2,3,4\}\) and $k_5=-2\pi$, the weight function \(W\) is

\[
W(s_j,0)=1,\qquad 
W(s_j,1)=(-1)^{\delta (j,5)} 
\]

From the exhaustive enumeration in Table~\ref{tab:truth_table_runex}
(where shaded rows are the satisfying assignments), we obtain
\[
\mathrm{WMC}(\Gamma,W)
\;=\;
\sum_{\tau \models\Gamma} \;\prod_{j=1}^{5} W\!\bigl(s_j,\tau(s_j)\bigr)
\;=\; 8.
\]
Since \(m=2\) and \(n=2\),
\[
\sum_{\vec{x}\in\{0,1\}^2}
\bigl\langle \vec{x}\,\big|\,\ps(C)\,\big|\,\vec{x}\bigr\rangle
\;=\; 4 \;=\; \tfrac{1}{\sqrt{2^{2}}}\,\mathrm{WMC}(\Gamma,W).
\]
Therefore, the circuit $C_1$ is equivalent to the circuit $C_2$. Note that WMC solvers normally do not perform such exhaustive enumeration, we include it here to help the reader verify the correctness of the results.
\qed
\end{example}


\begin{table}[htbp]
    \centering
\[
\begin{array}{cccc||ccccc||c||c||cccc||ccccc||c||}
x&z&y_0&y_1&s_1&s_2&s_3&s_4&s_5&\prod W\!\big(v,\tau(v)\big)&\qquad&
x&z&y_0&y_1&s_1&s_2&s_3&s_4&s_5&\prod W\!\big(v,\tau(v)\big)\\
\hline\hline
\rowcolor{gray!15} 0&0&0&0&0&0&0&0&0&1&&1&0&0&0&0&0&0&0&0&1\\
0&0&0&1&0&0&0&0&0&1&&1&0&0&1&0&0&1&0&0&-1\\
\rowcolor{gray!15} 0&0&1&0&0&0&0&0&0&1&&1&0&1&0&0&0&0&0&0&1\\
0&0&1&1&1&0&0&0&0&-1&&1&0&1&1&1&0&1&0&1&1\\\hline\hline
0&1&0&0&0&0&0&0&0&1&&1&1&0&0&0&0&0&1&0&-1\\
\rowcolor{gray!15} 0&1&0&1&0&0&0&0&0&1&&1&1&0&1&0&0&1&1&0&1\\
0&1&1&0&0&1&0&0&0&-1&&1&1&1&0&0&1&0&1&0&1\\
\rowcolor{gray!15} 0&1&1&1&1&1&0&0&0&1&&1&1&1&1&1&1&1&1&1&1\\
\end{array}
\]
    \caption{Weight computation for the CZ case study}
    \label{tab:truth_table_runex}
\end{table}

\section{Implementation}
\label{sec:xps}
We implemented the proposed approach in a Python-based tool named \toolname (Quantum Path-sum Reduction and WMC Solver). For the implementation of the rule-based reduction, we use \symp and \symf to store and manipulate the phase polynomial part and Boolean function part of the pathsum formalism. For solving weighted model counting problems, we integrated GPMC~\cite{suzuki2017improvement}, which was selected due to its support for negative weights and good performance in the Model Counting Competition.\footnote{To meet the specific needs of our framework, we use an extended version of GPMC that handles complex-valued weights, available at \url{https://github.com/System-Verification-Lab/GPMC}.}

\begin{table}
\centering
\scalebox{0.60}{%

\begin{tabular}{cccc|cc|cc|cc}
\hline
\multicolumn{4}{c|}{benchmarks}                                                              & \multicolumn{2}{c|}{Hybrid}                                           & \multicolumn{2}{c|}{RR}                                               & \multicolumn{2}{c}{WMC}                                               \\ \hline
\multicolumn{1}{c|}{Benchmark Name}                                 & q   & G1   & G2   & time                                & result                                  & time                                & result                                  & time                                & result                                  \\ \hline
\multicolumn{1}{c|}{}                                                 & 32  & 34   & 33   & \cellcolor[HTML]{38FFF8}0.027         & \cellcolor[HTML]{38FFF8}$\bigtriangleup$ & \cellcolor[HTML]{38FFF8}0.027         & \cellcolor[HTML]{38FFF8}$\bigtriangleup$ & 0.08                                & $\bigtriangleup$                         \\
\multicolumn{1}{c|}{}                                                 & 64  & 66   & 65   & \cellcolor[HTML]{38FFF8}0.084         & \cellcolor[HTML]{38FFF8}$\bigtriangleup$ & \cellcolor[HTML]{38FFF8}0.084         & \cellcolor[HTML]{38FFF8}$\bigtriangleup$ & 0.212                               & $\bigtriangleup$                         \\
\multicolumn{1}{c|}{\multirow{-3}{*}{ghz}}                         & 128 & 130  & 129  & \cellcolor[HTML]{38FFF8}0.306         & \cellcolor[HTML]{38FFF8}$\bigtriangleup$ & \cellcolor[HTML]{38FFF8}0.306         & \cellcolor[HTML]{38FFF8}$\bigtriangleup$ & 0.617                               & $\bigtriangleup$                         \\ \hline
\multicolumn{1}{c|}{}                                                 & 16  & 160  & 108  & \cellcolor[HTML]{38FFF8}0.162         & \cellcolor[HTML]{38FFF8}$\bigcirc$       & \cellcolor[HTML]{38FFF8}0.162         & \cellcolor[HTML]{38FFF8}$\bigcirc$       & 0.286                               & $\bigcirc$                               \\
\multicolumn{1}{c|}{}                                                 & 32  & 320  & 219  & \cellcolor[HTML]{38FFF8}0.33          & \cellcolor[HTML]{38FFF8}$\bigcirc$       & \cellcolor[HTML]{38FFF8}0.33          & \cellcolor[HTML]{38FFF8}$\bigcirc$       & 0.778                               & $\bigtriangleup$                         \\
\multicolumn{1}{c|}{\multirow{-3}{*}{graphstate}}                   & 64  & 640  & 435  & \cellcolor[HTML]{38FFF8}1.068         & \cellcolor[HTML]{38FFF8}$\bigtriangleup$ & \cellcolor[HTML]{38FFF8}1.068         & \cellcolor[HTML]{38FFF8}$\bigtriangleup$ & 2.427                               & $\bigtriangleup$                         \\ \hline
\multicolumn{1}{c|}{}                                                 & 4   & 190  & 160  & \cellcolor[HTML]{38FFF8}0.213         & \cellcolor[HTML]{38FFF8}$\bigtriangleup$ & 0.162                               & $\times$                                & 0.446                               & $\bigtriangleup$                         \\
\multicolumn{1}{c|}{}                                                 & 5   & 499  & 409  & \cellcolor[HTML]{38FFF8}0.194         & \cellcolor[HTML]{38FFF8}$\bigtriangleup$ & \cellcolor[HTML]{38FFF8}0.194         & \cellcolor[HTML]{38FFF8}$\bigtriangleup$ & 0.743                               & $\bigtriangleup$                         \\
\multicolumn{1}{c|}{\multirow{-3}{*}{grover}}             & 6   & 1568 & 1400 & \textgreater{}200                     & TO                                      & \textgreater{}200                     & TO                                      & \cellcolor[HTML]{38FFF8}28.579        & \cellcolor[HTML]{38FFF8}$\bigtriangleup$ \\ \hline
\multicolumn{1}{c|}{}                                                 & 7   & 133  & 170  & 1.023                               & $\bigtriangleup$                         & 0.915                               & $\times$                                & \cellcolor[HTML]{38FFF8}1.031         & \cellcolor[HTML]{38FFF8}$\bigtriangleup$ \\
\multicolumn{1}{c|}{}                                                 & 9   & 171  & 209  & 0.974                               & $\bigtriangleup$                         & 0.793                               & $\times$                                & \cellcolor[HTML]{38FFF8}0.586         & \cellcolor[HTML]{38FFF8}$\bigtriangleup$ \\
\multicolumn{1}{c|}{\multirow{-3}{*}{qaoa}}                         & 11  & 209  & 270  & 1.636                               & $\bigtriangleup$                         & 1.442                               & $\times$                                & \cellcolor[HTML]{38FFF8}1.274         & \cellcolor[HTML]{38FFF8}$\bigtriangleup$ \\ \hline
\multicolumn{1}{c|}{}                                                 & 16  & 672  & 543  & \cellcolor[HTML]{38FFF8}1.818         & \cellcolor[HTML]{38FFF8}$\bigtriangleup$ & 0.738                               & $\times$                                & 1.932                               & $\bigtriangleup$                         \\
\multicolumn{1}{c|}{}                                                 & 20  & 1040 & 825  & \cellcolor[HTML]{38FFF8}20.122        & \cellcolor[HTML]{38FFF8}$\bigtriangleup$ & 1.445                               & $\times$                                & 20.706                              & $\bigtriangleup$                         \\
\multicolumn{1}{c|}{}                                                 & 22  & 1254 & 970  & \cellcolor[HTML]{38FFF8}82.866        & \cellcolor[HTML]{38FFF8}$\bigtriangleup$ & 2.004                               & $\times$                                & 83.452                              & $\bigtriangleup$                         \\
\multicolumn{1}{c|}{\multirow{-4}{*}{qft}}                          & 24  & 1488 & 1111 & \textgreater{}200                     & TO                                      & 2.497                               & $\times$                                & \textgreater{}200                     & TO                                       \\ \hline
\multicolumn{1}{c|}{}                                                 & 4   & 111  & 122  & 0.738                               & $\bigtriangleup$                         & 0.645                               & $\times$                                & \cellcolor[HTML]{38FFF8}0.573         & \cellcolor[HTML]{38FFF8}$\bigtriangleup$ \\
\multicolumn{1}{c|}{}                                                 & 8   & 319  & 334  & \cellcolor[HTML]{38FFF8}15.694        & \cellcolor[HTML]{38FFF8}$\bigtriangleup$ & 1.142                               & $\times$                                & 32.203                              & $\bigtriangleup$                         \\
\multicolumn{1}{c|}{\multirow{-3}{*}{qnn}}                          & 12  & 623  & 648  & \textgreater{}200                     & TO                                      & 4.624                               & $\times$                                & \textgreater{}200                     & TO                                       \\ \hline
\multicolumn{1}{c|}{}                                                 & 8   & 192  & 150  & \cellcolor[HTML]{38FFF8}0.326         & \cellcolor[HTML]{38FFF8}$\bigtriangleup$ & \cellcolor[HTML]{38FFF8}0.326         & \cellcolor[HTML]{38FFF8}$\bigtriangleup$ & 0.355                               & $\bigtriangleup$                         \\
\multicolumn{1}{c|}{}                                                 & 16  & 712  & 562  & \cellcolor[HTML]{38FFF8}0.33          & \cellcolor[HTML]{38FFF8}$\bigtriangleup$ & \cellcolor[HTML]{38FFF8}0.33          & \cellcolor[HTML]{38FFF8}$\bigtriangleup$ & 37.901                              & $\bigtriangleup$                         \\
\multicolumn{1}{c|}{}                                                 & 20  & 1092 & 861  & \cellcolor[HTML]{38FFF8}0.581         & \cellcolor[HTML]{38FFF8}$\bigtriangleup$ & 0.58                                & $\times$                                & \textgreater{}200                     & TO                                       \\
\multicolumn{1}{c|}{\multirow{-4}{*}{qpeexact}}                     & 24  & 1537 & 1173 & \textgreater{}200                     & TO                                      & 2.219                               & $\times$                                & \textgreater{}200                     & TO                                       \\ \hline
\multicolumn{1}{c|}{}                                                 & 8   & 192  & 150  & \cellcolor[HTML]{38FFF8}0.333         & \cellcolor[HTML]{38FFF8}$\bigtriangleup$ & \cellcolor[HTML]{38FFF8}0.333         & \cellcolor[HTML]{38FFF8}$\bigtriangleup$ & 0.608                               & $\bigtriangleup$                         \\
\multicolumn{1}{c|}{}                                                 & 16  & 712  & 562  & \cellcolor[HTML]{38FFF8}0.295         & \cellcolor[HTML]{38FFF8}$\bigtriangleup$ & \cellcolor[HTML]{38FFF8}0.295         & \cellcolor[HTML]{38FFF8}$\bigtriangleup$ & 32.205                              & $\bigtriangleup$                         \\
\multicolumn{1}{c|}{}                                                 & 20  & 1092 & 861  & \cellcolor[HTML]{38FFF8}0.623         & \cellcolor[HTML]{38FFF8}$\bigtriangleup$ & 0.62                                & $\times$                                & \textgreater{}200                     & TO                                       \\
\multicolumn{1}{c|}{\multirow{-4}{*}{qpeinexact}}                   & 24  & 1552 & 1175 & \cellcolor[HTML]{38FFF8}1.075         & \cellcolor[HTML]{38FFF8}$\bigtriangleup$ & 1.049                               & $\times$                                & \textgreater{}200                     & TO                                       \\ \hline
\multicolumn{1}{c|}{}                                                 & 3   & 141  & 113  & \cellcolor[HTML]{38FFF8}0.107         & \cellcolor[HTML]{38FFF8}$\bigcirc$       & \cellcolor[HTML]{38FFF8}0.107         & \cellcolor[HTML]{38FFF8}$\bigcirc$       & 0.635                               & $\bigcirc$                               \\
\multicolumn{1}{c|}{}                                                 & 4   & 357  & 294  & \cellcolor[HTML]{38FFF8}0.154         & \cellcolor[HTML]{38FFF8}$\bigtriangleup$ & \cellcolor[HTML]{38FFF8}0.154         & \cellcolor[HTML]{38FFF8}$\bigtriangleup$ & 0.677                               & $\bigtriangleup$                         \\
\multicolumn{1}{c|}{\multirow{-3}{*}{qwalk}}                        & 5   & 1305 & 1158 & \textgreater{}200                     & TO                                      & \textgreater{}200                     & TO                                      & \cellcolor[HTML]{38FFF8}18.207        & \cellcolor[HTML]{38FFF8}$\bigtriangleup$ \\ \hline
\multicolumn{1}{c|}{}                                                 & 4   & 66   & 44   & 0.388                               & $\bigtriangleup$                         & 0.26                               & $\times$                                & \cellcolor[HTML]{38FFF8}0.19         & \cellcolor[HTML]{38FFF8}$\bigtriangleup$ \\
\multicolumn{1}{c|}{}                                                 & 8   & 134  & 95   & 0.818                               & $\bigtriangleup$                               & 0.41                                & $\times$                                & \cellcolor[HTML]{38FFF8}0.216         & \cellcolor[HTML]{38FFF8}$\bigtriangleup$ \\
\multicolumn{1}{c|}{\multirow{-3}{*}{vqe}}                          & 16  & 270  & 211  & 2.878                               & $\bigcirc$                               & 2.335                               & $\times$                                & \cellcolor[HTML]{38FFF8}0.598         & \cellcolor[HTML]{38FFF8}$\bigcirc$       \\ \hline
\multicolumn{1}{c|}{}                                                 & 4   & 55   & 25   & 0.147         & $\bigtriangleup$ & 0.147         & $\bigtriangleup$ & \cellcolor[HTML]{38FFF8}0.116                               & \cellcolor[HTML]{38FFF8}$\bigtriangleup$                         \\
\multicolumn{1}{c|}{}                       & 8   & 127  & 61   & \cellcolor[HTML]{38FFF8}0.182         & \cellcolor[HTML]{38FFF8}$\bigtriangleup$ & 0.128                               & $\times$                                & 0.345                               & $\bigtriangleup$                         \\
\multicolumn{1}{c|}{\multirow{-3}{*}{wstate}}& 16   & 271  & 129   & TO         & $\times$ & 11.029                               & $\times$                                & TO                               & $\times$                         \\\hline
\end{tabular}

}
\caption{Comparison of Operation Modes.}
\label{tab:modeCmp}
\vspace{-6mm}
\end{table}

Our tool is publicly available under the MIT license. All experiments were conducted on an Azure Standard E8ads v5 server (8 vCPUs, 64 GiB memory).

The \toolname tool supports three operation modes. Assume that we are checking the equivalence of two circuits $C_1$ and $C_2$, all modes first construct the path-sum formula of $C_2(C_1)^\dagger$:
\begin{itemize}
    \item {\bfseries RR (Reduction-rules)} Applies only rule-based reductions during the path-sum formula construction. If the resulting formula differs from that of an identity matrix, the tool reports ``unknown.''
    \item {\bfseries WMC (Weighted Model Counting)} Constructs the path-sum formula without applying any reduction rules. It uses weighted model counting to determine if the final path-sum formula corresponds to the identity.
    \item {\bfseries Hybrid}  Initially applies reduction rules to simplify the formula. If the resulting formula is not syntactically identical to that of the identity, weighted model counting is employed to check for semantic equivalence.
\end{itemize}

\section{Experimental Evaluation}
To evaluate the performance of our proposed approach, we designed three sets of experiments.

\subsection{Comparison of Operation Modes}\label{sec:mode_compare}
In the first experiment, we compare the performance of the three operation modes (RR, WMC, and Hybrid) of \toolname. We use the benchmark suite from~\cite{mei2024equivalence}, which includes quantum algorithm circuits such as GHZ, QAOA, VQE, QNN, and Grover, among others. The original circuits are expressed using the gate set {Clifford+T, \Rx, \Ry, \Rz}. To facilitate comparison, we use \qiskit (version 1.4.2) to transpile these circuits to the gate set \{\Hg, \Ry, \Rz, \CNOT\}, and perform equivalence checking between the original and transpiled ones.


The details are presented in Table~\ref{tab:modeCmp}. In the table, $q$ denotes the number of qubits in the benchmark circuits, while $G_1$ and $G_2$ represent the total number of gates under the two different gate sets, respectively. For each operation mode, we record the runtime for equivalence checking and the corresponding result. For the \textbf{result}, the symbols $\bigcirc$ and $\bigtriangleup$ indicate that the two circuits are equivalent or equivalent up to a global phase, respectively, whereas ``TO'' denotes a timeout where the runtime exceeds 200 seconds. The symbol $\times$ indicates that the tool fails to verify equivalence. We highlight the best-performing mode in \tikz[baseline]{\fill[mygreen] (0,0) rectangle (0.4,0.2);} for each example.

Overall, the results align with our expectations: the \textbf{Hybrid} mode consistently outperforms the other two, particularly on more challenging benchmarks. Specifically, \textbf{RR} alone solves only 14 out of the 35 examples, while \textbf{WMC} solves 28, and \textbf{Hybrid} solves 29.
Examining the solving time for the largest instance within each algorithm category, we observe that \textbf{Hybrid} performs better for GHZ, GraphState, QFT, QNN, QPE-Exact, QPE-Inexact, QWalk, and WState constructions. In contrast, \textbf{WMC} outperforms Hybrid in only three categories: Grover, QAOA, and VQE benchmarks.

Upon further investigation, we found that the inefficiency in the three categories (Grover, QAOA, and VQE) where \textbf{WMC} outperforms \textbf{Hybrid} stems from a low \emph{hit rate}. Specifically, the percentage of instances where the path-sum formula matches the patterns required for applying a reduction rule is typically low, around 50\%. This reduced hit rate significantly limits the effectiveness of rule-based simplifications, making pure WMC more efficient in these cases.

\subsection{Comparison with Other RR and WMC Implementations}\label{sec:implementation_compare}

One might argue that the superior performance of the \textbf{Hybrid} mode could stem from an inefficient implementation of the \textbf{RR} or \textbf{WMC} modes within \toolname. To address this concern, we further compare \toolname against state-of-the-art tools: \feymann~\cite{feynman2010quantum}, which implements the path-sum reduction approach, and \quokka~\cite{mei2024equivalence}, which implements a WMC-based equivalence checking strategy, each using their own benchmark suites\footnote{We transpiled the Feynman benchmark to the basis gates [H, Y, Z, T, T$^\dagger$, CX] using Qiskit~1.2.4.  We encountered transpilation issues with Qiskit~1.4.2 on this set of benchmarks. 
All transpiled files are available on our GitHub repository.}. We want to again emphasize that although both \quokka and \textbf{WMC} reduced the circuit equivalence problem to weighted model counting, they use very different encodings. We still compare the two because they are already the closest.

\newcommand{\figurewidth}{\textwidth}
\begin{figure}[htb]
    \vspace{-0.3cm}
    \centering
    \begin{subfigure}[t]{\dimexpr 0.49\figurewidth\relax}
        \centering
        \begin{tikzpicture}[baseline=(current bounding box.north)]
          \begin{axis}[
            width=1.2\figurewidth,
            height=0.8\figurewidth,
            ymin=5e-3, ymax=200,
            symbolic x coords={ 
              tof\_3, tof\_4, mod5\_4, barenco\_tof\_3, barenco\_tof\_4, tof\_5, barenco\_tof\_5,
              vbe\_adder\_3, gf2\textasciicircum4\_mult, csla\_mux\_3, tof\_10, rc\_adder\_6,
              gf2\textasciicircum5\_mult, mod\_red\_21, barenco\_tof\_10, gf2\textasciicircum6\_mult,
              hwb6, csum\_mux\_9, qft\_4, mod\_mult\_55, gf2\textasciicircum7\_mult, qcla\_com\_7,
              gf2\textasciicircum8\_mult, ham15-low, gf2\textasciicircum9\_mult, qcla\_adder\_10,
              gf2\textasciicircum10\_mult, adder\_8, ham15-med, gf2\textasciicircum16\_mult,
              gf2\textasciicircum32\_mult, gf2\textasciicircum64\_mult
            },
            yticklabel style={font=\tiny},
            xtick=data,
            x tick label style={rotate=90,anchor=east,font=\tiny},
        enlarge x limits=0.02, 
            bar width=1.5pt,
            grid=both, grid style={dashed,gray!30},
            legend style={at={(0.5,1.02)},anchor=south,legend columns=2,font=\tiny}
          ]
            \addplot+[ybar,fill=blue!60,bar shift=-0.75pt,mark=none, area legend] coordinates { 
              (tof\_3,0.012) (tof\_4,0.018) (mod5\_4,0.022) (barenco\_tof\_3,0.026)
              (barenco\_tof\_4,0.027) (tof\_5,0.029) (barenco\_tof\_5,0.045) (vbe\_adder\_3,0.061)
              (gf2\textasciicircum4\_mult,0.085) (csla\_mux\_3,0.088) (tof\_10,0.102)
              (rc\_adder\_6,0.121) (gf2\textasciicircum5\_mult,0.137) (mod\_red\_21,0.138)
              (barenco\_tof\_10,0.188) (gf2\textasciicircum6\_mult,0.239) (hwb6,0.262)
              (csum\_mux\_9,0.291) (qft\_4,0.326) (mod\_mult\_55,0.352)
              (gf2\textasciicircum7\_mult,0.374) (qcla\_com\_7,0.41) (gf2\textasciicircum8\_mult,0.56)
              (ham15-low,0.603) (gf2\textasciicircum9\_mult,0.701) (qcla\_adder\_10,0.769)
              (gf2\textasciicircum10\_mult,0.947) (adder\_8,1.535) (ham15-med,1.703)
              (gf2\textasciicircum16\_mult,5.129) (gf2\textasciicircum32\_mult,20.757)
              (gf2\textasciicircum64\_mult,196.903)
            };
            \addplot+[ybar,fill=red!60,bar shift=0.75pt,mark=none, area legend] coordinates { 
              (tof\_3,0.026) (tof\_4,0.043) (mod5\_4,0.045) (barenco\_tof\_3,0.008)
              (barenco\_tof\_4,0.012) (tof\_5,0.044) (barenco\_tof\_5,0.021) (vbe\_adder\_3,0.089)
              (gf2\textasciicircum4\_mult,0.031) (csla\_mux\_3,0.079) (tof\_10,0.221)
              (rc\_adder\_6,0.371) (gf2\textasciicircum5\_mult,0.069) (mod\_red\_21,0.401)
              (barenco\_tof\_10,0.147) (gf2\textasciicircum6\_mult,0.153) (hwb6,0.722)
              (csum\_mux\_9,0.958) (qft\_4,0.058) (mod\_mult\_55,0.064)
              (gf2\textasciicircum7\_mult,0.309) (qcla\_com\_7,4.597) (gf2\textasciicircum8\_mult,0.657)
              (ham15-low,0.719) (gf2\textasciicircum9\_mult,1.018) (qcla\_adder\_10,18.814)
              (gf2\textasciicircum10\_mult,2.609) (adder\_8,15.425) (ham15-med,10.901)
              (gf2\textasciicircum16\_mult,37.899) (gf2\textasciicircum32\_mult,200)
              (gf2\textasciicircum64\_mult,200)
            };
            \legend{\textbf{RR}, \feymann}
          \end{axis}
        \end{tikzpicture}
        \caption{\textbf{RR} vs.\ \feymann}
        \label{fig:vsFeymann}
    \end{subfigure}
    \hfill
    \begin{subfigure}[t]{\dimexpr 0.49\figurewidth\relax}
        \centering
        \begin{tikzpicture}[baseline=(current bounding box.north)]
          \begin{axis}[
            width=1.2\figurewidth,
            height=0.8\figurewidth,
            ymin=0, ymax=200,
            symbolic x coords={
  ghz\_32, wstate\_4, vqe\_4, ghz\_64, vqe\_8, graphstate\_16, wstate\_8, qpeexact\_8,
  grover-noancilla\_4, qnn\_4, qaoa\_9, vqe\_16, qpeinexact\_8, ghz\_128, qwalk\_3,
  qwalk\_4, grover-noancilla\_5, graphstate\_32, qaoa\_7, qaoa\_11, qft\_16,
  graphstate\_64, qwalk\_5, qft\_20, grover-noancilla\_6, qnn\_8, qpeinexact\_16,
  qpeexact\_16, qft\_22, wstate\_16, wstate\_32
},
            yticklabel style={font=\tiny},
            xtick=data,
            x tick label style={rotate=90,anchor=east,font=\tiny},
            enlarge x limits=0.02,
            bar width=1.5pt,
            grid=both, grid style={dashed,gray!30},
            legend style={at={(0.5,1.02)},anchor=south,legend columns=2,font=\tiny}
          ]
            \addplot+[ybar,fill=blue!60,bar shift=-0.75pt,mark=none, area legend] coordinates { 
            (ghz\_32,0.08)             (wstate\_4,0.116)          (vqe\_4,0.19)              (ghz\_64,0.212)
            (qpeinexact\_8,0.608)      (qpeexact\_8,0.355)        (qnn\_4,0.573)             (qwalk\_3,0.635)
            (grover-noancilla\_4,0.446) (vqe\_8,0.216)             (ghz\_128,0.617)           (wstate\_8,0.345)
            (graphstate\_16,0.286)     (qaoa\_7,1.031)            (qwalk\_4,0.677)           (qaoa\_9,0.586)
            (grover-noancilla\_5,0.743) (qaoa\_11,1.274)           (vqe\_16,0.598)            (qft\_16,1.932)
            (graphstate\_32,0.778)     (qft\_20,20.706)           (graphstate\_64,2.427)     (qpeinexact\_16,32.295)
            (qnn\_8,32.203)            (qpeexact\_16,37.901)      (qft\_22,83.452)           (grover-noancilla\_6,28.579)
            (qwalk\_5,18.297)          (wstate\_16,200)           (wstate\_32,200)
            };
            \addplot+[ybar,fill=red!60,bar shift=0.75pt,mark=none, area legend] coordinates {
              (ghz\_32,0.017)   (wstate\_4,0.079)  (vqe\_4,0.095)   (ghz\_64,0.019)
      (qpeinexact\_8,1.82)  (qpeexact\_8,2.645) (qnn\_4,0.15)  (qwalk\_3,0.19)
      (grover-noancilla\_4,0.092) (vqe\_8,0.135) (ghz\_128,0.023) (wstate\_8,0.111)
      (graphstate\_16,0.032) (qaoa\_7,0.069) (qwalk\_4,1.399) (qaoa\_9,0.322)
      (grover-noancilla\_5,3.719) (qaoa\_11,1.142) (vqe\_16,0.258) (qft\_16,13.107)
      (graphstate\_32,0.05) (qft\_20,200) (graphstate\_64,0.113) (qpeinexact\_16,200)
      (qnn\_8,148.927) (qpeexact\_16,200) (qft\_22,200) (grover-noancilla\_6,164.546)
      (qwalk\_5,47.09) (wstate\_16,0.123) (wstate\_32,0.256)
            };
            \legend{\textbf{WMC}, \quokka}
          \end{axis}
        \end{tikzpicture}
        \caption{\textbf{WMC} vs.\ \quokka}
        \label{fig:vsQuokka}
    \end{subfigure}
    \vspace{-0.2cm}
    \caption{Comparisons between \toolname\ and other tools.}
    \vspace{-0.5cm}
    \label{fig:overallComparison}
\end{figure}
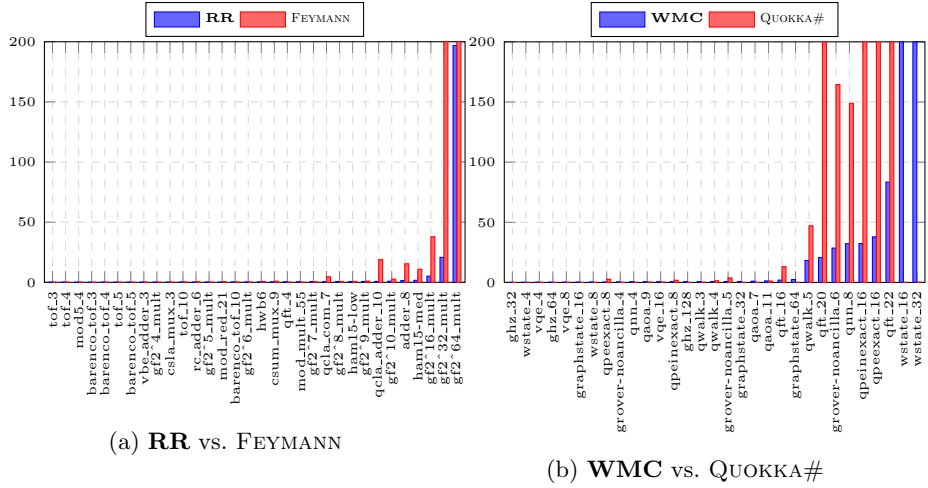

In Fig.~\ref{fig:vsFeymann}, we present a bar chart comparing the \textbf{RR} mode of \toolname with \feymann. The x-axis represents the benchmark example names, while the y-axis indicates the run time. A timeout of 80 seconds is enforced. The results demonstrate that the \textbf{RR} mode of \toolname consistently outperforms \feymann on more challenging examples, specifically, those requiring more than one second to solve.

We compare between the \textbf{WMC} mode of \toolname and \quokka in Fig.~\ref{fig:vsQuokka}. As before, the x-axis represents the benchmark example names, and the y-axis indicates the run time, with a timeout of 200 seconds enforced. The results show that the performance of the two tools is comparable. \quokka outperforms \toolname on four examples: graphstate 32, graphstate 64, wstate 16, and wstate 32. This outcome is somewhat expected. Although both tools use a weighted model counter as their backend engine, they adopt very different encodings of the path formula.  The performance of model counters is highly sensitive to the encoding; even for the same underlying concept, different encodings can lead to vastly different execution times. Therefore, it is unsurprising that \toolname performs better in some cases, while \quokka excels in others. Overall, we believe that it is fair to conclude that our implementation is comparable to the state-of-the-art implementation of the same algorithm.

\subsection{Comparison with Other Categories of Equivalence Checkers}\label{sec:tool_compare}

\begin{table}
\vspace{-6mm}
\centering
  \begin{subtable}{0.5\textwidth}
    \centering
    \scalebox{0.52}{

\begin{tabular}{cccc|cc|cc|cc|cc}
\hline
\multicolumn{4}{c|}{benchmarks}                                       & \multicolumn{2}{c|}{\toolname}                                           & \multicolumn{2}{c|}{\quokka}                                        & \multicolumn{2}{c|}{\qcec}                                          & \multicolumn{2}{c}{\pyzx}                                           \\ \hline
\multicolumn{1}{c|}{Name}                     & q   & G    & G2   & \multicolumn{1}{c}{time} & \multicolumn{1}{c|}{result} & \multicolumn{1}{c}{time} & \multicolumn{1}{c|}{result} & \multicolumn{1}{c}{time} & \multicolumn{1}{c|}{result} & \multicolumn{1}{c}{time} & \multicolumn{1}{c}{result} \\ \hline
\multicolumn{1}{c|}{}                                     & 32  & 34   & 33   & 0.027                                   & $\bigtriangleup$                         & \cellcolor[HTML]{FFFE65}0.017   & \cellcolor[HTML]{FFFE65}$\bigtriangleup$ & 0.03                         & $\bigtriangleup$                         & \cellcolor[HTML]{38FFF8}0.014 & \cellcolor[HTML]{38FFF8}$\bigtriangleup$ \\
\multicolumn{1}{c|}{}                                     & 64  & 66   & 65   & 0.084                                   & $\bigtriangleup$                         & \cellcolor[HTML]{38FFF8}0.019   & \cellcolor[HTML]{38FFF8}$\bigtriangleup$ & 0.035                        & $\bigtriangleup$                         & \cellcolor[HTML]{FFFE65}0.024 & \cellcolor[HTML]{FFFE65}$\bigtriangleup$ \\
\multicolumn{1}{c|}{\multirow{-3}{*}{ghz}}              & 128 & 130  & 129  & 0.306                                   & $\bigtriangleup$                         & \cellcolor[HTML]{38FFF8}0.023   & \cellcolor[HTML]{38FFF8}$\bigtriangleup$ & \cellcolor[HTML]{FFFE65}0.048 & \cellcolor[HTML]{FFFE65}$\bigtriangleup$ & 0.055                        & $\bigtriangleup$                         \\ \hline
\multicolumn{1}{c|}{}                                     & 16  & 160  & 108  & 0.162                                   & $\bigcirc$                               & \cellcolor[HTML]{FFFE65}0.032   & \cellcolor[HTML]{FFFE65}$\bigtriangleup$ & \cellcolor[HTML]{38FFF8}0.027 & \cellcolor[HTML]{38FFF8}$\bigtriangleup$ & 0.026                        & $\times$                                 \\
\multicolumn{1}{c|}{}                                     & 32  & 320  & 219  & 0.33                                    & $\bigcirc$                               & \cellcolor[HTML]{FFFE65}0.05    & \cellcolor[HTML]{FFFE65}$\bigtriangleup$ & \cellcolor[HTML]{38FFF8}0.031 & \cellcolor[HTML]{38FFF8}$\bigtriangleup$ & 0.045                        & $\times$                                 \\
\multicolumn{1}{c|}{\multirow{-3}{*}{graphstate}}       & 64  & 640  & 435  & 1.068                                   & $\bigtriangleup$                         & \cellcolor[HTML]{FFFE65}0.113   & \cellcolor[HTML]{FFFE65}$\bigtriangleup$ & \cellcolor[HTML]{38FFF8}0.044 & \cellcolor[HTML]{38FFF8}$\bigtriangleup$ & 0.143                        & $\times$                                 \\ \hline
\multicolumn{1}{c|}{}                                     & 4   & 190  & 160  & 0.213                                   & $\bigtriangleup$                         & \cellcolor[HTML]{FCFF2F}0.092   & \cellcolor[HTML]{FCFF2F}$\bigtriangleup$ & \cellcolor[HTML]{38FFF8}0.026 & \cellcolor[HTML]{38FFF8}$\bigtriangleup$ & 0.056                        & $\times$                                 \\
\multicolumn{1}{c|}{}                                     & 5   & 499  & 409  & \cellcolor[HTML]{FFFE65}0.194         & \cellcolor[HTML]{FFFE65}$\bigtriangleup$ & 3.719                        & $\bigtriangleup$                         & \cellcolor[HTML]{38FFF8}0.03  & \cellcolor[HTML]{38FFF8}$\bigtriangleup$ & 0.458                        & $\times$                                 \\
\multicolumn{1}{c|}{\multirow{-3}{*}{grover}} & 6   & 1568 & 1400 & \textgreater{}200                       & TO                                       & \cellcolor[HTML]{FFFE65}164.546 & \cellcolor[HTML]{FFFE65}$\bigtriangleup$ & \cellcolor[HTML]{38FFF8}0.076 & \cellcolor[HTML]{38FFF8}$\bigtriangleup$ & 4.734                        & $\times$                                 \\ \hline
\multicolumn{1}{c|}{}                                     & 7   & 133  & 170  & 1.023                                   & $\bigtriangleup$                         & \cellcolor[HTML]{FFFE65}0.069   & \cellcolor[HTML]{FFFE65}$\bigtriangleup$ & \cellcolor[HTML]{38FFF8}0.033 & \cellcolor[HTML]{38FFF8}$\bigcirc$       & 0.043                        & $\times$                                 \\
\multicolumn{1}{c|}{}                                     & 9   & 171  & 209  & 0.974                                   & $\bigtriangleup$                         & \cellcolor[HTML]{FFFE65}0.322   & \cellcolor[HTML]{FFFE65}$\bigtriangleup$ & \cellcolor[HTML]{38FFF8}0.056 & \cellcolor[HTML]{38FFF8}$\bigtriangleup$ & 0.122                        & $\times$                                 \\
\multicolumn{1}{c|}{\multirow{-3}{*}{qaoa}}             & 11  & 209  & 270  & 1.636                                   & $\bigtriangleup$                         & \cellcolor[HTML]{FFFE65}1.142   & \cellcolor[HTML]{FFFE65}$\bigtriangleup$ & \cellcolor[HTML]{38FFF8}0.065 & \cellcolor[HTML]{38FFF8}$\bigtriangleup$ & 0.141                        & $\times$                                 \\ \hline
\multicolumn{1}{c|}{}                                     & 16  & 672  & 543  & \cellcolor[HTML]{38FFF8}1.818         & \cellcolor[HTML]{38FFF8}$\bigtriangleup$ & \cellcolor[HTML]{FFFE65}13.107  & \cellcolor[HTML]{FFFE65}$\bigtriangleup$ & 0.712                        & $\times$                                 & 0.391                        & $\times$                                 \\
\multicolumn{1}{c|}{}                                     & 20  & 1040 & 825  & \cellcolor[HTML]{38FFF8}20.122        & \cellcolor[HTML]{38FFF8}$\bigtriangleup$ & \textgreater{}200                & TO                                       & 18.602                       & $\times$                                 & 0.693                        & $\times$                                 \\
\multicolumn{1}{c|}{}                                     & 22  & 1254 & 970  & \cellcolor[HTML]{38FFF8}82.966        & \cellcolor[HTML]{38FFF8}$\bigtriangleup$ & \textgreater{}200                & TO                                       & 157.183                      & $\times$                                 & 0.934                        & $\times$                                 \\
\multicolumn{1}{c|}{\multirow{-4}{*}{qft}}              & 24  & 1488 & 1111 & \textgreater{}200                       & TO                                       & \textgreater{}200                & TO                                       & \textgreater{}200              & TO                                       & 1.297                        & $\times$                                 \\ \hline
\multicolumn{1}{c|}{}                                     & 4   & 111  & 122  & 0.738                                   & $\bigtriangleup$                         & \cellcolor[HTML]{FFFE65}0.15    & \cellcolor[HTML]{FFFE65}$\bigtriangleup$ & \cellcolor[HTML]{38FFF8}0.031 & \cellcolor[HTML]{38FFF8}$\bigtriangleup$ & 0.047                        & $\times$                                 \\
\multicolumn{1}{c|}{}                                     & 8   & 319  & 334  & \cellcolor[HTML]{FFFE65}15.694        & \cellcolor[HTML]{FFFE65}$\bigtriangleup$ & 148.927                      & $\bigtriangleup$                         & \cellcolor[HTML]{38FFF8}0.041 & \cellcolor[HTML]{38FFF8}$\bigtriangleup$ & 0.172                        & $\times$                                 \\
\multicolumn{1}{c|}{\multirow{-3}{*}{qnn}}              & 12  & 623  & 648  & \textgreater{}200                       & TO                                       & \textgreater{}200                & TO                                       & \cellcolor[HTML]{38FFF8}0.173 & \cellcolor[HTML]{38FFF8}$\bigtriangleup$ & 1.412                        & $\times$                                 \\ \hline
\multicolumn{1}{c|}{}                                     & 8   & 192  & 150  & \cellcolor[HTML]{FFFE65}0.326         & \cellcolor[HTML]{FFFE65}$\bigtriangleup$ & 2.645                        & $\bigtriangleup$                         & \cellcolor[HTML]{38FFF8}0.042 & \cellcolor[HTML]{38FFF8}$\bigtriangleup$ & 0.126                        & $\times$                                 \\
\multicolumn{1}{c|}{}                                     & 16  & 712  & 562  & \cellcolor[HTML]{38FFF8}0.33          & \cellcolor[HTML]{38FFF8}$\bigtriangleup$ & \textgreater{}200                & TO                                       & \textgreater{}200              & TO                                       & 1.978                        & $\times$                                 \\
\multicolumn{1}{c|}{}                                     & 20  & 1092 & 861  & \cellcolor[HTML]{38FFF8}0.589         & \cellcolor[HTML]{38FFF8}$\bigtriangleup$ & \textgreater{}200                & TO                                       & \textgreater{}200              & TO                                       & 4.612                        & $\times$                                 \\
\multicolumn{1}{c|}{\multirow{-4}{*}{qpeexact}}         & 24  & 1537 & 1173 & \textgreater{}200                       & TO                                       & \textgreater{}200                & TO                                       & \textgreater{}200              & TO                                       & 7.145                        & $\times$                                 \\ \hline
\multicolumn{1}{c|}{}                                     & 8   & 192  & 150  & 0.333         & $\bigtriangleup$ & 1.82                         & $\bigtriangleup$                         & \cellcolor[HTML]{38FFF8}0.101 & \cellcolor[HTML]{38FFF8}$\bigtriangleup$ & \cellcolor[HTML]{FFFE65}0.167 & \cellcolor[HTML]{FFFE65}$\bigtriangleup$ \\
\multicolumn{1}{c|}{}                                     & 16  & 712  & 562  & \cellcolor[HTML]{38FFF8}0.295         & \cellcolor[HTML]{38FFF8}$\bigtriangleup$ & \textgreater{}200                & TO                                       & \textgreater{}200              & TO                                       & \cellcolor[HTML]{FFFE65}2.853 & \cellcolor[HTML]{FFFE65}$\bigtriangleup$ \\
\multicolumn{1}{c|}{}                                     & 20  & 1092 & 861  & \cellcolor[HTML]{38FFF8}0.629         & \cellcolor[HTML]{38FFF8}$\bigtriangleup$ & \textgreater{}200                & TO                                       & \textgreater{}200              & TO                                       & 8.937                        & $\times$                                 \\
\multicolumn{1}{c|}{\multirow{-4}{*}{qpeinexact}}       & 24  & 1552 & 1175 & \cellcolor[HTML]{38FFF8}1.075         & \cellcolor[HTML]{38FFF8}$\bigtriangleup$ & \textgreater{}200                & TO                                       & \textgreater{}200              & TO                                       & 18.478                       & $\times$                                 \\ \hline
\multicolumn{1}{c|}{}                                     & 3   & 141  & 113  & \cellcolor[HTML]{FFFE65}0.107         & \cellcolor[HTML]{FFFE65}$\bigcirc$       & 0.19                         & $\bigtriangleup$                         & \cellcolor[HTML]{38FFF8}0.096 & \cellcolor[HTML]{38FFF8}$\bigcirc$       & 0.179                        & $\times$                                 \\
\multicolumn{1}{c|}{}                                     & 4   & 357  & 294  & \cellcolor[HTML]{FFFE65}0.154         & \cellcolor[HTML]{FFFE65}$\bigtriangleup$ & 1.399                        & $\bigtriangleup$                         & \cellcolor[HTML]{38FFF8}0.121 & \cellcolor[HTML]{38FFF8}$\bigtriangleup$ & 1.256                        & $\times$                                 \\
\multicolumn{1}{c|}{\multirow{-3}{*}{qwalk}}   & 5   & 1305 & 1158 & \textgreater{}200                       & TO                                       & \cellcolor[HTML]{FFFE65}47.09   & \cellcolor[HTML]{FFFE65}$\bigtriangleup$ & \cellcolor[HTML]{38FFF8}0.225 & \cellcolor[HTML]{38FFF8}$\bigtriangleup$ & 5.025                        & $\times$                                 \\ \hline
\multicolumn{1}{c|}{}                                     & 4   & 66   & 44   & 0.388                                   & $\bigtriangleup$                         & \cellcolor[HTML]{38FFF8}0.095   & \cellcolor[HTML]{38FFF8}$\bigtriangleup$ & \cellcolor[HTML]{FFFE65}0.109 & \cellcolor[HTML]{FFFE65}$\bigtriangleup$ & 0.083                        & $\times$                                 \\
\multicolumn{1}{c|}{}                                     & 8   & 134  & 95   & 0.818                                   & $\bigtriangleup$                               & \cellcolor[HTML]{FFFE65}0.135   & \cellcolor[HTML]{FFFE65}$\bigtriangleup$ & \cellcolor[HTML]{38FFF8}0.125 & \cellcolor[HTML]{38FFF8}$\bigcirc$       & 0.116                        & $\times$                                 \\
\multicolumn{1}{c|}{\multirow{-3}{*}{vqe}}              & 16  & 270  & 211  & \cellcolor[HTML]{FFFE65}2.878         & \cellcolor[HTML]{FFFE65}$\bigcirc$       & \cellcolor[HTML]{38FFF8}0.258   & \cellcolor[HTML]{38FFF8}$\bigtriangleup$ & \textgreater{}200              & TO                                       & 0.276                        & $\times$                                 \\ \hline
\multicolumn{1}{c|}{}                                     & 4   & 55   & 25   & 0.147         & $\bigtriangleup$ & \cellcolor[HTML]{38FFF8}0.079   & \cellcolor[HTML]{38FFF8}$\bigtriangleup$ & \cellcolor[HTML]{FFFE65}0.102                        & \cellcolor[HTML]{FFFE65}$\bigtriangleup$                         & 0.072                        & $\times$                                 \\
\multicolumn{1}{c|}{}                                     & 8   & 127  & 61   & 0.182                                   & $\bigtriangleup$                         & \cellcolor[HTML]{FFFE65}0.111   & \cellcolor[HTML]{FFFE65}$\bigtriangleup$ & \cellcolor[HTML]{38FFF8}0.094 & \cellcolor[HTML]{38FFF8}$\bigtriangleup$ & 0.092                        & $\times$                                 \\
\multicolumn{1}{c|}{}                                     & 16  & 271  & 129  & \textgreater{}200                       & TO                                       & \cellcolor[HTML]{FFFE65}0.123   & \cellcolor[HTML]{FFFE65}$\bigtriangleup$ & \cellcolor[HTML]{38FFF8}0.11  & \cellcolor[HTML]{38FFF8}$\bigtriangleup$ & 0.083                        & $\times$                                 \\
\multicolumn{1}{c|}{\multirow{-4}{*}{wstate}}           & 32  & 559  & 261  & \textgreater{}200                       & TO                                       & \cellcolor[HTML]{FFFE65}0.256   & \cellcolor[HTML]{FFFE65}$\bigtriangleup$ & \cellcolor[HTML]{38FFF8}0.114 & \cellcolor[HTML]{38FFF8}$\bigtriangleup$ & 0.407                        & $\times$                                 \\ \hline
\end{tabular}
}
    \caption{Normal case.}
  \end{subtable}\hfill
  \begin{subtable}{0.5\textwidth}
    \centering
    \scalebox{0.52}{
\begin{tabular}{cccc|cc|cc|cc|cc}
\hline
\multicolumn{4}{c|}{benchmarks}                                            & \multicolumn{2}{c|}{\toolname}                   & \multicolumn{2}{c|}{WMC}                      & \multicolumn{2}{c|}{\quokka}         & \multicolumn{2}{c}{\qcec}      \\ \hline
\multicolumn{1}{c|}{Name}                    & q   & G    & G2   & time                               & result   & time                               & result   & time              & result          & time                               & result   \\ \hline
\multicolumn{1}{c|}{\multirow{3}{*}{ghz}}              & 32  & 34   & 34   & 0.04                               & $\times$ & 0.101                              & $\times$ & 0.02              & $\times$ & 0.061                              & $\times$ \\
\multicolumn{1}{c|}{}                                  & 64  & 66   & 66   & 0.095                              & $\times$ & 0.187                              & $\times$ & 0.021             & $\times$ & 0.098                              & $\times$ \\
\multicolumn{1}{c|}{}                                  & 128 & 130  & 131  & 0.352                              & $\times$ & 0.642                              & $\times$ & 0.023             & $\times$ & 0.172                              & $\times$ \\ \hline
\multicolumn{1}{c|}{\multirow{3}{*}{graphstate}}       & 16  & 160  & 110  & 0.16                               & $\times$ & 0.256                              & $\times$ & 0.032             & $\times$ & 0.076                              & $\times$ \\
\multicolumn{1}{c|}{}                                  & 32  & 320  & 222  & 0.399                              & $\times$ & 0.762                              & $\times$ & 0.051             & $\times$ & 0.403                              & $\times$ \\
\multicolumn{1}{c|}{}                                  & 64  & 640  & 440  & 1.223                              & $\times$ & 2.251                              & $\times$ & 0.131             & $\times$ & \textgreater{}200 & TO       \\ \hline
\multicolumn{1}{c|}{\multirow{3}{*}{grover}} & 4   & 190  & 162  & 0.478                              & $\times$ & 0.224                              & $\times$ & 0.1               & $\times$ & 0.047                              & $\times$ \\
\multicolumn{1}{c|}{}                                  & 5   & 499  & 414  & \textgreater{}200 & TO       & 0.969                              & $\times$ & 1.645             & $\times$ & 0.086                              & $\times$ \\
\multicolumn{1}{c|}{}                                  & 6   & 1568 & 1415 & 45.913                             & MO       & 36.179                             & $\times$ & 185.872           & $\times$ & 0.385                              & $\times$ \\ \hline
\multicolumn{1}{c|}{\multirow{3}{*}{qaoa}}             & 7   & 133  & 172  & 1.61                               & $\times$ & 0.328                              & $\times$ & 0.068             & $\times$ & 0.129                              & $\times$ \\
\multicolumn{1}{c|}{}                                  & 9   & 171  & 212  & 1.046                              & $\times$ & 0.6                                & $\times$ & 0.575             & $\times$ & 0.195                              & $\times$ \\
\multicolumn{1}{c|}{}                                  & 11  & 209  & 273  & 1.806                              & $\times$ & 1.039                              & $\times$ & 2.107             & $\times$ & 0.215                              & $\times$ \\ \hline
\multicolumn{1}{c|}{\multirow{4}{*}{qft}}              & 16  & 672  & 549  & 5.674                              & $\times$ & 4.675                              & $\times$ & 30.238            & $\times$ & \textgreater{}200 & TO       \\
\multicolumn{1}{c|}{}                                  & 20  & 1040 & 834  & 30.508                             & $\times$ & 32.808                             & $\times$ & \textgreater{}200 & TO         & \textgreater{}200 & TO       \\
\multicolumn{1}{c|}{}                                  & 22  & 1254 & 980  & \textgreater{}200 & TO       & \textgreater{}200 & TO       & \textgreater{}200 & TO         & \textgreater{}200 & TO       \\
\multicolumn{1}{c|}{}                                  & 24  & 1488 & 1123 & \textgreater{}200 & TO       & \textgreater{}200 & TO       & \textgreater{}200 & TO         & \textgreater{}200 & TO       \\ \hline
\multicolumn{1}{c|}{\multirow{3}{*}{qnn}}              & 4   & 111  & 124  & 0.838                              & $\times$ & 0.229                              & $\times$ & 0.113             & $\times$ & 0.073                              & $\times$ \\
\multicolumn{1}{c|}{}                                  & 8   & 319  & 338  & 13.421                             & $\times$ & 47.934                             & $\times$ & 168.344           & $\times$ & 2.131                              & $\times$ \\
\multicolumn{1}{c|}{}                                  & 12  & 623  & 655  & \textgreater{}200 & TO       & \textgreater{}200 & TO       & \textgreater{}200 & TO         & \textgreater{}200 & TO       \\ \hline
\multicolumn{1}{c|}{\multirow{4}{*}{qpeexact}}         & 8   & 192  & 152  & 45.512                             & $\times$ & 0.312                              & $\times$ & 1.662             & $\times$ & 0.089                              & $\times$ \\
\multicolumn{1}{c|}{}                                  & 16  & 712  & 568  & 0.993                              & $\times$ & 46.389                             & $\times$ & \textgreater{}200 & TO         & \textgreater{}200 & TO       \\
\multicolumn{1}{c|}{}                                  & 20  & 1092 & 870  & 9.718                              & $\times$ & \textgreater{}200 & TO       & \textgreater{}200 & TO         & \textgreater{}200 & TO       \\
\multicolumn{1}{c|}{}                                  & 24  & 1537 & 1185 & \textgreater{}200 & TO       & \textgreater{}200 & TO       & \textgreater{}200 & TO         & \textgreater{}200 & TO       \\ \hline
\multicolumn{1}{c|}{\multirow{4}{*}{qpeinexact}}       & 8   & 192  & 152  & 0.66                               & $\times$ & 0.199                              & $\times$ & 1.217             & $\times$ & 0.115                              & $\times$ \\
\multicolumn{1}{c|}{}                                  & 16  & 712  & 568  & 2.147                              & $\times$ & 49.435                             & $\times$ & \textgreater{}200 & TO         & \textgreater{}200 & TO       \\
\multicolumn{1}{c|}{}                                  & 20  & 1092 & 870  & 9.124                              & $\times$ & \textgreater{}200 & TO       & \textgreater{}200 & TO         & \textgreater{}200 & TO       \\
\multicolumn{1}{c|}{}                                  & 24  & 1552 & 1187 & 1.934                              & $\times$ & \textgreater{}200 & TO       & \textgreater{}200 & TO         & \textgreater{}200 & TO       \\ \hline
\multicolumn{1}{c|}{\multirow{3}{*}{qwalk}}  & 3   & 141  & 115  & 0.707                              & $\times$ & 0.231                              & $\times$ & 0.069             & $\times$ & 0.074                              & $\times$ \\
\multicolumn{1}{c|}{}                                  & 4   & 357  & 297  & 5.314                              & $\times$ & 0.638                              & $\times$ & 0.685             & $\times$ & 0.084                              & $\times$ \\
\multicolumn{1}{c|}{}                                  & 5   & 1305 & 1170 & \textgreater{}200 & TO       & 27.067                             & $\times$ & 24.648            & $\times$ & 0.397                              & $\times$ \\ \hline
\multicolumn{1}{c|}{\multirow{3}{*}{vqe}}              & 4   & 66   & 45   & 1.215                              & $\times$ & 0.094                              & $\times$ & 0.025             & $\times$ & 0.303                              & $\times$ \\
\multicolumn{1}{c|}{}                                  & 8   & 134  & 96   & 0.509                              & $\times$ & 0.228                              & $\times$ & 0.037             & $\times$ & 0.349                              & $\times$ \\
\multicolumn{1}{c|}{}                                  & 16  & 270  & 214  & 8.79                               & $\times$ & 0.919                              & $\times$ & 0.067             & $\times$ & \textgreater{}200 & TO       \\ \hline
\multicolumn{1}{c|}{\multirow{4}{*}{wstate}}           & 4   & 55   & 26   & 0.127                              & $\times$ & 0.077                              & $\times$ & 0.022             & $\times$ & 0.302                              & $\times$ \\
\multicolumn{1}{c|}{}                                  & 8   & 127  & 62   & 0.187                              & $\times$ & 0.331                              & $\times$ & 0.027             & $\times$ & 0.312                              & $\times$ \\
\multicolumn{1}{c|}{}                                  & 16  & 271  & 131  & \textgreater{}200 & TO       & \textgreater{}200 & TO       & 0.044             & $\times$ & 0.349                              & $\times$ \\
\multicolumn{1}{c|}{}                                  & 32  & 559  & 264  & \textgreater{}200 & TO       & \textgreater{}200 & TO       & 0.07              & $\times$ & 0.478                              & $\times$ \\ \hline
\end{tabular}
}
    \caption{Random rotation gate injected.}
  \end{subtable}
\caption{Comparison with State-of-the-art Tools}
\label{tab:compSOTA}
\end{table}

Finally, we evaluate \toolname against a broader selection of state-of-the-art quantum circuit equivalence checking tools, spanning various methodological categories. We selected the following tools for comparison. We highlight the best-performing tool in \tikz[baseline]{\fill[mygreen] (0,0) rectangle (0.4,0.2);} and second performing tool in \tikz[baseline]{\fill[myyellow] (0,0) rectangle (0.4,0.2);} for each example:

\begin{itemize} 
\item \quokka, a tool based on weighted model counting (WMC) 
\item \qcec~\cite{quetschlich2023mqtbench}, which combines three techniques ZX-calculus, QMDDs, and simulation, executed in parallel \item \feymann, which relies on circuit reduction rules 
\item \sliqec~\cite{wei2022accurate}, a tool based on binary decision diagrams (BDD) 
\item \pyzx~\cite{kissinger2020Pyzx}, which uses ZX-calculus \item \tdd~\cite{hong2022tensor}\cite{hong2023decision}\cite{hong2024equivalence}, which employs tensor decision diagrams (TDD) 
\end{itemize}

We found that both \feymann and \sliqec do not support rotation gates such as RX, RY, and RZ, and therefore are unable to handle the majority of our benchmark examples. \tdd frequently produces incorrect results, likely due to floating-point precision issues. Consequently, we only report detailed comparative results for \quokka, \qcec, and \pyzx in Table~\ref{tab:compSOTA}, using the same format as in Table~\ref{tab:modeCmp}. Here \toolname means the Hybrid model of the tool. The timeout for each tool is set to 200 seconds. 

From Table \ref{tab:compSOTA} (a), we observe that the three tools \toolname, \quokka, and \qcec are complementary. \toolname achieves the best results on the quantum Fourier transform and phase estimation algorithms, which involve many distinct rotation angles and require precise numerical reasoning about phase accumulation. \quokka performs better on the GHZ and variational quantum algorithms, while \qcec excels on graph-state preparation, Grover search, QAOA, quantum neural network, W-state, and quantum random walk benchmarks.

\toolname is particularly effective for circuits whose unitary transformations depend on continuous parameters such as rotation angles or phase shifts, where verifying equivalence requires reasoning about exact numerical values. In contrast, algorithms such as Grover’s search rely mainly on discrete structural symmetries and repetitive iterations, which are better handled by combinatorial reasoning as in \qcec. A unified framework combining the complementary strengths of all three tools would yield a state-of-the-art virtual equivalence checker.

In Table~\ref{tab:compSOTA}(b), we evaluate the robustness of the tools by injecting a random rotation gate into each circuit to simulate a phase error and check whether the tools can detect it. A $\times$ indicates that the injected error is successfully detected. For reference, we also include the WMC configuration of \toolname as a separate column. \pyzx is omitted because it fails to detect any of the injected errors.

\toolname detects the most errors, 28 out of 37 cases, demonstrating strong sensitivity to subtle numerical discrepancies. Its WMC mode detects the same number of buggy circuits but in different instances, suggesting complementary reasoning mechanisms within the framework. In comparison, \quokka detects 27 and \qcec 24 cases. These results highlight the superior capability of \toolname in identifying small yet meaningful deviations in circuit behavior.

\section{Conclusion}
We presented a hybrid verifier for quantum circuit equivalence that combines path-sum reductions with a new weighted model counting (WMC) encoding. The reductions aggressively simplify circuits when algebraic structure is present, and the WMC back end provides a complete semantic check up to global phase on the residual path-sum. We implemented the approach in \toolname, which supports reduction-only, WMC-only, and hybrid modes.
On standard benchmarks, the hybrid mode consistently outperforms either component alone and is competitive with state-of-the-art tools. It is especially effective on QFT and phase estimation, where correctness depends on many distinct rotation angles and precise phase accumulation. The robustness study with injected random rotations shows the highest error-detection rate among the compared tools. 
Overall, \toolname demonstrates that coupling symbolic reductions with complete model-based reasoning yields a practical and scalable path to quantum circuit equivalence checking.

\paragraph{Acknowledgments}
This work was supported by 
National Science and Technology Council, R.O.C., projects NSTC 114-2221-E-027-044 -MY2 and NSTC 114-2119-M-001-002-; Air Force Office of Scientific Research project FA2386-23-1-4107;
Academia Sinica Investigator Project Grant AS-IV-114-M07; Foxconn project;  French National
Research Agency (ANR) HQI project ANR-22-PNCQ0001;
Quantum Delta NL (QDNL) project.

\paragraph{Data Availability}
All data and code supporting the findings of this study are openly available in the repository \url{https://github.com/PhysicsQoo/pathsum}. The repository contains the implementation, example inputs, and artifacts required to reproduce the experimental results reported in this paper.

\bibliographystyle{plain}
\bibliography{Reference}

\end{document}